\documentclass[10pt,twocolumn,twoside,submit]{JCNtran}
\usepackage[cmex10]{amsmath}
\usepackage{amssymb}

\usepackage{cases}
\usepackage{multirow}
\usepackage{empheq}

\usepackage[noadjust]{cite}
\usepackage{verbatim}
\usepackage{lipsum}

\usepackage{epsfig}

\usepackage[caption=false,font=footnotesize]{subfig}
\usepackage{color}

\usepackage{algorithm}
\usepackage{algpseudocode}

\newtheorem{theorem}{Theorem}
\newtheorem{lemma}[theorem]{Lemma}

\newtheorem{corollary}[theorem]{Corollary}
\newtheorem{definition}[theorem]{Definition}

\newtheorem{proposition}[theorem]{Proposition}
\newtheorem{remark}[theorem]{Remark}

\DeclareMathOperator{\rank}{rank}

\begin{document}

\title{Redundancy Allocation in Finite-Length Nested Codes for Nonvolatile Memories}

\author{Yongjune~Kim and~B.~V.~K.~Vijaya~Kumar
\thanks{The material in this paper was presented in part at the IEEE Conference on Communications (ICC), Budapest, Hungary, June 2013 and the IEEE International Symposium on Information Theory (ISIT), Istanbul, Turkey, July 2013.
Y. Kim is with the Coordinated Science Laboratory, University of Illinois at Urbana-Champaign, Urbana, IL, 61801, USA and B. V. K. Vijaya Kumar is with Carnegie Mellon University Africa in Kigali, Rwanda (e-mail: yongjune@illinois.edu, kumar@ece.cmu.edu).}
}
\maketitle

\begin{abstract}
	In this paper, we investigate the optimum way to allocate redundancy of finite-length nested codes for modern nonvolatile memories suffering from both \emph{permanent} defects and \emph{transient} errors (erasures or random errors). A nested coding approach such as partitioned codes can handle both permanent defects and transient errors by using two parts of redundancy: 1) redundancy to deal with permanent defects and 2) redundancy for transient errors. We consider two different channel models of the binary defect and erasure channel (BDEC) and the binary defect and symmetric channel (BDSC). The transient errors of the BDEC are erasures and the BDSC's transient errors are modeled by the binary symmetric channel, respectively. Asymptotically, the probability of recovery failure can converge to zero if the capacity region conditions of nested codes are satisfied. However, the probability of recovery failure of finite-length nested codes can be significantly variable for different redundancy allocations even though they all satisfy the capacity region conditions. Hence, we formulate the redundancy allocation problem of finite-length nested codes to minimize the recovery failure probability. We derive the upper bounds on the probability of recovery failure and use them to estimate the optimal redundancy allocation. Numerical results show that our estimated redundancy allocation matches well the optimal redundancy allocation.
\end{abstract}

\begin{keywords}
	Redundancy allocation, channel coding, nonvolatile memory, stuck-at defects, encoding
\end{keywords}

\section{\uppercase{Introduction}}

	The channel model of memory with defective cells (i.e., stuck-at defects) was introduced by Kuznetsov and Tsybakov~\cite{Kuznetsov1974}. As shown in Fig.~\ref{fig:BDC}, this channel model has a channel state $S \in \{0, 1, \lambda\}$, which is called defect information. The state $S=0$ corresponds to a stuck-at 0 defect that always outputs a 0 independent of its input value, the state $S=1$ corresponds to a stuck-at 1 defect that always outputs a 1, and the state $S = \lambda$ corresponds to a normal cell that outputs the same value as its input. The probabilities of 0, 1, $\lambda$ states are $\beta/2$, $\beta/2$ (assuming a symmetric defect probability), and $1 - \beta$, respectively~\cite{Kuznetsov1974, ElGamal2011}. We call this channel model the binary defect channel (BDC).  
	
	The capacity of the BDC is $1 - \beta$ when both the encoder and the decoder know the defect information. If the decoder is aware of the defect locations in an array of memory cells, then the defects can be treated as erasures so that the capacity is $1 - \beta$~\cite{Heegard1983capacity, ElGamal2011}. On the other hand, Kuznetsov and Tsybakov investigated the model where the encoder knows the channel state information (i.e., locations and stuck-at values of defects) and the decoder does not have any information of defects~\cite{Kuznetsov1974}. It was shown that the capacity is also $1 - \beta$ even when only the encoder knows the defect information~\cite{Kuznetsov1974, Heegard1983capacity}. The capacity of the BDC is given by 	
	\begin{equation}\label{eq:BDC_capacity}
		C_{\mathrm{BDC}} = 1 - \beta. 
	\end{equation}

	A practical coding scheme for the BDC is \emph{additive encoding} which masks defects by adding a carefully selected binary vector~\cite{Kuznetsov1974, Tsybakov1975additive, Dumer1990}. The goal of masking defects is to make a codeword whose values at the locations of defects match the stuck-at values of corresponding defects. 
	
	\begin{figure}[!t]
		\centering
		\includegraphics[width=0.3\textwidth]{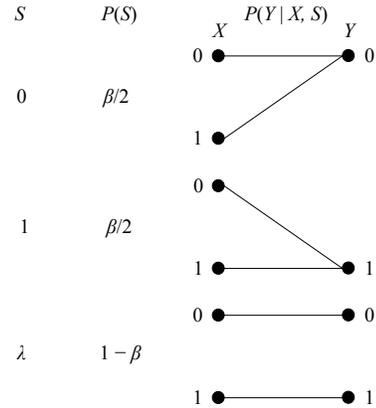}
		\vspace{-3mm}
		\caption{Binary defect channel (BDC).}
		\label{fig:BDC}
		\vspace{-5mm}
	\end{figure}


	Recently, the BDC has received renewed attention as a possible channel model for nonvolatile memories such as phase change memories (PCMs) and flash memories~\cite{Lastras-Montano2010,Jagmohan2010coding,Hwang2011iterative,Jacobvitz2013,Kim2013coding, Kim2013redundancy, Mahdavifar2015,Kim2016locally}. The write operation of PCM involves heating and cooling chalcogenide alloy, which results in expansion and contraction of the alloy. The repeated writes eventually cause the heating element to detach from the alloy resulting in a stuck-at defect that can be read but not written~\cite{Pirovano2004}. In flash memories, it has been shown that highly interfered cells can be regarded as stuck-at defects because of the unique properties of flash memory~\cite{Kim2014dirtyflash}. 
	
	In~\cite{Tsybakov1975}, the binary defect and symmetric channel (BDSC) model was considered where memory cells suffer from both stuck-at defects and random errors. This channel model is more realistic since random errors happen in PCM cells because of write noise, resistance drift, and unwanted heating~\cite{Pirovano2004}. Similarly, flash memory suffers from random errors due to charge loss and random telegraph noise~\cite{Prall2007}. 
	


	A \emph{nested coding} approach for the BDSC was proposed by Tsybakov~\cite{Tsybakov1975} and Heegard~\cite{Heegard1983plbc}, where the codes are referred to as \emph{partitioned linear block codes (PLBCs)}. Heegard showed that PLBC can achieve the capacity of the BDSC as well as suggested specific code constructions based on Bose-Chaudhuri-Hocquenghem (BCH) codes, i.e., partitioned BCH (PBCH) codes~\cite{Heegard1983plbc,Zamir2002nested}.
	
	A PLBC requires two generator matrices as other nested linear codes. One of them is for masking stuck-at defects and the other generator matrix is for correcting transient errors. Due to these two generator matrices, we can separate the redundancy for masking defects from the redundancy for correcting random errors~\cite{Heegard1983plbc}. We assume that the number of redundant bits for masking defects and correcting random errors are $l$ and $r$, respectively. Hence, the total redundancy is $l + r = n - k$ (where $n$ is the codeword size and $k$ is the message size). Note that the code rate is $R = k / n = (n - l - r) / n $.
	
	The fact that the redundancy can be divided into two parts leads to the problem of redundancy allocation. The objective is to find the optimal $(l, r)$ minimizing the probability of recovery failure for a fixed total redundancy $n-k$. This redundancy allocation problem can be stated as follows:
	\begin{equation}\label{eq:opt_prob}
		\begin{aligned}
		\left( l^*, r^* \right)  = \; & \underset{(l, r)}{\text{argmin}} & & P( \widehat{\mathbf{m}} \ne \mathbf{m})  \\
		& \text{subject to} & & 0 \le l \le n - k, \quad 0 \le r \le n - k\\
		&&&  l+r = n-k 
		\end{aligned}
	\end{equation}	
	where $\mathbf{m}$ and $\widehat{\mathbf{m}}$ denote a message and its estimate (recovered message), respectively. $P( \widehat{\mathbf{m}} \ne \mathbf{m})$ denotes the probability of recovery failure. Not surprisingly, the optimal redundancy allocation $(l^*, r^*)$ depends on the channel. If a channel exhibits only stuck-at defects, we should allot all redundancy to masking stuck-at defects, i.e., $(l^{*}, r^{*})=(n-k, 0)$. It is also expected that the optimal redundancy allocation for a channel with only random errors is $(l^{*}, r^{*})=(0, n-k)$.
		
 	Our main contributions in this paper are the formulation of redundancy allocation problem for finite-length PLBC and the techniques to estimate the optimal redundancy allocation. Asymptotically, the probability of recovery failure can converge to zero if the capacity conditions of PLBC are satisfied. However, we observe that the probability of recovery failure of finite-length PLBC can be significantly different depending on redundancy allocations even though they all satisfy the capacity region conditions. 
 	
 	We propose the following techniques to estimate the optimal redundancy allocation of finite-length PLBC. First, we investigate the redundancy allocation of the binary defect and erasure channel (BDEC) where transient errors are modeled by erasures. We derive the upper bound on the probability of recovery failure. Based on this upper bound, we will obtain the estimate $(\widehat{l}, \widehat{r})$ for the optimal $(l^{*}, r^{*})$. Similarly, we derive the estimate of the probability of recovery failure for the BDSC. Using this estimate of the probability of recovery failure, we can estimate the optimal redundancy allocation. Numerical results show that the estimates $(\widehat{l}, \widehat{r})$ of both the BDEC and the BDSC are well matched with the optimal redundancy allocations obtained by Monte-Carlo simulations.
	    
    To the best of our knowledge, the redundancy allocation problem for finite-length PLBC has not been addressed. Past work providing asymptotic results is not useful in trying to determine the optimal redundancy allocation for finite-length PLBC. We believe that this redundancy allocation problem is practically important with the applications of nonvolatile memories. 
	
	The rest of the paper is as follows. In Section~\ref{sec:channel}, we explain the channel models and PLBC. In Section \ref{sec:ra}, we propose techniques to determine the proper redundancy allocation of finite-length partitioned codes for the BDEC and the BDSC. Section~\ref{sec:conclusion} concludes the paper.

\section{\uppercase{Channel Model and PLBC: Preliminaries}} \label{sec:channel}

\subsection{Channel Model}

	The BDC in Fig.~\ref{fig:BDC} can be described as follows. Let ``$\circ$'' denote the operator  $\circ:\{0, 1\} \times \left\{0, 1, \lambda\right\} \rightarrow \{0, 1\}$ as in~\cite{Heegard1983plbc}
	\begin{equation}\label{eq:circ_operator}
		c \circ s =
		\begin{cases}
		c, & \text{if } s = \lambda ; \\
		s, & \text{if } s \ne \lambda.
		\end{cases}
	\end{equation}
	By using the operator $\circ$, an $n$-cell memory with stuck-at defects is modeled by 
	\begin{equation}
	\mathbf{y} = \mathbf{c} \circ \mathbf{s} \label{eq:BDC_vector}
	\end{equation}
	where ${\mathbf{c}}, {\mathbf{y}}\in \left\{0, 1\right\}^n$ are the codeword and the channel output vector, respectively. Also, the channel state vector ${\mathbf{s}} \in \left\{ 0, 1, \lambda \right\}^n$ represents the defect information in the $n$-cell memory.  Note that $\circ$ is the vector component-wise operator.
	
	As shown in~Fig.~\ref{fig:BDC}, the probabilities of stuck-at defects and normal cells are given by 
	\begin{equation}\label{eq:BDC_prob}
		P(S = s) = 
		\begin{cases}
		1 - \beta, & \text{if } s = \lambda ; \\
		\frac{\beta}{2}, & \text{if } s \ne \lambda.
		\end{cases}
	\end{equation}

	Now we consider the channel models with both permanent stuck-at defects and transient errors, i.e., the BDEC and the BDSC. For the BDEC, $\mathbf{c} \circ \mathbf{s}$ will be the channel input of the BEC with the erasure probability $\alpha$. Also, the BDSC can be modeled by
	\begin{equation}
		\mathbf{y} = \mathbf{c} \circ \mathbf{s} + \mathbf{z} \label{eq:BDSC_vector}
	\end{equation}
	where $\mathbf{c} \circ \mathbf{s}$ is the channel input of the binary symmetric channel (BSC) with the crossover probability $p$ and ${\mathbf{z}} \in \{0, 1\}^n$ is the random error vector. 
	
	Assuming that stuck-at defects do not suffer from transient errors, the capacity can be determined by the following theorem. 	
	\begin{theorem}\cite{Heegard1983capacity}\label{thm:capacity_separate}
		If the stuck-at defects do not suffer from transient errors, then the capacity is given by
		\begin{equation} \label{eq:capacity_separate}
			\widetilde{C}^{\text{max}} = \widetilde{C}^{\text{enc}} = P(S = \lambda)\widetilde{C}
		\end{equation}
	\end{theorem}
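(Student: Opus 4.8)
The plan is to prove both equalities by a sandwich argument. Because revealing the defect pattern $\mathbf{s}$ to the decoder as well as to the encoder can only enlarge the achievable-rate region, one has the trivial ordering $\widetilde{C}^{\text{enc}} \le \widetilde{C}^{\text{max}}$. It therefore suffices to establish the converse $\widetilde{C}^{\text{max}} \le P(S=\lambda)\widetilde{C}$ and the achievability $\widetilde{C}^{\text{enc}} \ge P(S=\lambda)\widetilde{C}$; chaining these with the ordering pinches $P(S=\lambda)\widetilde{C} \le \widetilde{C}^{\text{enc}} \le \widetilde{C}^{\text{max}} \le P(S=\lambda)\widetilde{C}$, forcing all three to the common value.

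For the converse, and for the exact value of $\widetilde{C}^{\text{max}}$ at the same time, I would use that with the state known at both terminals the capacity is $\max_{p(x|s)} I(X;Y\mid S)$ and decompose $I(X;Y\mid S) = \sum_{s} P(S=s)\,I(X;Y\mid S=s)$. For $s=\lambda$ the cell is the bare transient channel, so $I(X;Y\mid S=\lambda) \le \widetilde{C}$ with equality under the transient channel's optimal input; for $s\in\{0,1\}$ the output equals the stuck value deterministically and, by the hypothesis that defects are immune to transient errors, $I(X;Y\mid S=s)=0$. Inserting the state probabilities of \eqref{eq:BDC_prob} collapses the sum to $P(S=\lambda)\widetilde{C}$, which is therefore exactly $\widetilde{C}^{\text{max}}$.

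The achievability $\widetilde{C}^{\text{enc}} \ge P(S=\lambda)\widetilde{C}$ is the substantive half, and I would derive it from the Gelfand--Pinsker formula $\widetilde{C}^{\text{enc}} = \max\big[I(U;Y)-I(U;S)\big]$ for a state-dependent channel with state known only at the encoder. Operationally this is a masking-plus-binning scheme: the encoder uses its knowledge of $\mathbf{s}$ to pick, from a bin of size $\approx 2^{nI(U;S)}$, a codeword whose values agree with the stuck values at every defect, and the decoder runs joint-typicality decoding without knowing $\mathbf{s}$. The intended accounting splits the redundancy into $l \approx \beta n$ bits for masking the $\beta n$ defects and $r$ bits for the residual transient errors, which strike only the $P(S=\lambda)n$ normal cells; charging $r$ against those cells alone leaves the net rate at $P(S=\lambda)\widetilde{C}$.

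The main obstacle is exactly this identification of the error-correction cost. For the BDEC it is transparent: taking the auxiliary to be the masked symbol $U = X\circ S$ with $X$ uniform (so $U$ is Bernoulli$(1/2)$ given $S=\lambda$ and equals $S$ otherwise) gives $I(U;S)=\beta$ and $I(U;Y)=1-P(S=\lambda)\alpha$, hence $I(U;Y)-I(U;S)=P(S=\lambda)(1-\alpha)=P(S=\lambda)\widetilde{C}$ at once. For the BDSC, however, the same full-masking auxiliary yields only $1-H\big(P(S=\lambda)p\big)-\beta$, which by concavity of the binary entropy lies strictly below $P(S=\lambda)(1-H(p))$: it squanders the error-freeness of the defects. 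Closing this gap needs either a genuine optimization of $U$ in the Gelfand--Pinsker formula or, equivalently, a refined random-coding argument establishing that the typical residual error is essentially confined to the normal cells, so that the number of error patterns the decoder must resolve is $2^{P(S=\lambda)nH(p)}$ rather than $2^{nH(P(S=\lambda)p)}$. Showing that this confined count, not the marginal crossover probability $P(S=\lambda)p$, controls the decoding error is the delicate step.
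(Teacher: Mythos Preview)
The paper does not prove this theorem; it is cited from \cite{Heegard1983capacity} with no argument supplied, so there is no in-paper proof to compare your proposal against.

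Judged on its own, your converse for $\widetilde{C}^{\max}$ via the decomposition $I(X;Y\mid S)=\sum_s P(S=s)I(X;Y\mid S=s)$ is correct and standard, and the ordering $\widetilde{C}^{\text{enc}}\le\widetilde{C}^{\max}$ is immediate. The achievability half, however, is incomplete, and you correctly diagnose where. The full-masking Gelfand--Pinsker auxiliary $U=X\circ S$ hits $(1-\beta)(1-\alpha)$ on the nose for the BDEC, but for the BDSC under the hypothesis of the theorem it yields only $1-\beta-h\big((1-\beta)p\big)$, strictly below the target $(1-\beta)(1-h(p))$ by the concavity gap $h\big((1-\beta)p\big)-(1-\beta)h(p)>0$. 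Your final paragraph names the difficulty --- the decoder should be charged only for error patterns confined to the $(1-\beta)n$ normal cells --- but supplies no auxiliary $U$ or decoding rule that realizes this accounting. That is precisely the substantive content of the Heegard--El~Gamal achievability, and without it the proposal is a plan with its hardest step left open. For comparison, the paper's own Proposition~\ref{thm:capacity_bound_BDSC} (for the harder model where defects \emph{do} suffer errors) uses the same full-masking auxiliary and likewise stops at the lower bound $1-\beta-h(p)$, never closing the gap to $\widetilde{C}_{\text{BDSC}}$; the paper relies entirely on the citation for Theorem~\ref{thm:capacity_separate}.
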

	where $\widetilde{C}$ is the capacity of the discrete memoryless channel (DMC) with $P(Y \mid X) = P(Y \mid X, S = \lambda)$. The superscript `max' in \eqref{eq:capacity_separate} represents the capacity when the defect information is fully known to both the encoder and decoder. Also, the superscript `enc' denotes the capacity when only the encoder knows the defect information. Since $\widetilde{C}^{\text{enc}}$ is the same as $\widetilde{C}^{\text{max}}$, we can omit the superscript if the stuck-at defects do not suffer from transient errors. 
	
	Let $\widetilde{C}_{\text{BDEC}}$ denote the capacity of the channel when only the normal cells can be erased. Similarly, let $\widetilde{C}_{\text{BDSC}}$ denote the capacity of the channel when only the normal cells suffer from random errors. By Theorem \ref{thm:capacity_separate}, it is clear that
	\begin{align}
		\widetilde{C}_{\text{BDEC}} & = (1-\beta)(1-\alpha), \label{eq:capacity_BDEC_tilde} \\
		\widetilde{C}_{\text{BDSC}} & = (1-\beta)(1-h(p)) \label{eq:capacity_BDSC_tilde}
	\end{align}
	where $h\left( x\right) = -x \log_2 x - \left(1-x\right) \log_2\left(1-x\right)$. If neither the encoder nor the decoder knows the defect information, the capacity is given by~\cite{Heegard1983capacity}
	\begin{equation} \label{eq:capacity_BDSC_tilde_min}
		\widetilde{C}_{\text{BDSC}}^{\text{min}} = 1-h\left((1-\beta)p + \frac{\beta}{2} \right),
	\end{equation}	
	which is the capacity of the BSC with the crossover probability 
	\begin{equation}\label{eq:p_tilde}
	\widetilde{p} = (1-\beta)p + \frac{\beta}{2}.
	\end{equation}			
	
	If both the stuck-at defects and the normal cells suffer from transient errors, it is difficult to derive closed-from expressions for $C_{\text{BDSC}}^{\text{max}}$ and $C_{\text{BDSC}}^{\text{enc}}$. Note that the superscript `max' represents the capacity when the defect information is fully known to both the encoder and decoder and the superscript `enc' represents the capacity when only the encoder knows the defect information. Instead, these capacities can be evaluated numerically~\cite{Heegard1983capacity}. For the BDEC, we can derive $C_{\text{BDEC}}^{\text{max}}$ and $C_{\text{BDEC}}^{\text{enc}}$ because of the channel model's simpleness. If both the encoder and decoder know the defect information, this channel is equivalent to the BEC with the erasure probability of $\alpha + \beta - \alpha \beta$. Hence, 
	\begin{equation}
	    C_{\text{BDEC}}^{\text{max}} = 1 - \alpha - \beta + \alpha \beta = (1 - \beta)(1 - \alpha)
	\end{equation}
	which is the same as in \eqref{eq:capacity_BDEC_tilde}. The capacity $C_{\text{BDEC}}^{\text{enc}}$ is derived as follows. 
	\begin{proposition} \label{thm:capacity_BDEC}
		If only the encoder knows the defect information, the capacity of the BDEC is given by
		\begin{equation} \label{eq:capacity_BDEC}
			C_{\text{BDEC}}^{\text{enc}} = 1 - \alpha - \beta
		\end{equation}
		which shows that $C_{\text{BDEC}}^{\text{enc}} < C_{\text{BDEC}}^{\text{max}} = \widetilde{C}_{\text{BDEC}}$.
	\end{proposition}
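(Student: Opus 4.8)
The plan is to treat the BDEC as a discrete memoryless channel with a random state $S\in\{0,1,\lambda\}$ (with $P(S=0)=P(S=1)=\beta/2$ and $P(S=\lambda)=1-\beta$) that is known noncausally at the encoder but unknown at the decoder, and to invoke the Gelfand--Pinsker characterization
\begin{equation}
 C_{\text{BDEC}}^{\text{enc}} = \max_{p(u\mid s),\, x=f(u,s)} \left[ I(U;Y) - I(U;S) \right],
\end{equation}
where $U$ is an auxiliary variable. Writing $W = X\circ S$ for the masked symbol fed to the BEC, the first reduction exploits the erasure structure: the erasure is independent of $(U,S,X)$, the erasure indicator is a function of $Y$, and $U\to W\to Y$ is a Markov chain. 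Conditioning on the erasure event then gives $I(U;Y)=(1-\alpha)\,I(U;W)$, so the task becomes maximizing $(1-\alpha)I(U;W)-I(U;S)$ over all admissible $p(u,x\mid s)$.

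For achievability I would exhibit the additive-encoding choice $U=W=X\circ S$, with $X$ drawn uniformly on normal cells; i.e.\ $U$ equals the stuck-at value on defective cells and a fair coin flip on normal cells. A direct computation gives $H(U)=1$ and $H(U\mid S)=1-\beta$, so $I(U;S)=\beta$, while $U=W$ forces $I(U;W)=H(W)=1$. This yields $(1-\alpha)\cdot 1-\beta=1-\alpha-\beta$, establishing $C_{\text{BDEC}}^{\text{enc}}\ge 1-\alpha-\beta$ and identifying the optimal operating point.

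The converse $C_{\text{BDEC}}^{\text{enc}}\le 1-\alpha-\beta$ is the crux. I would first record the identity
\begin{equation}\label{eq:conv_identity}
 (1-\alpha)I(U;W)-I(U;S) = (1-\beta)\,H(W\mid S=\lambda)-\alpha\,I(U;W)-I(U;S\mid W),
\end{equation}
obtained from $I(U;W\mid S)=H(W\mid S)=(1-\beta)H(W\mid S=\lambda)$ (since $W$ is a deterministic function of $(U,S)$ and $W=S$ on defects) together with the two chain-rule expansions of $I(U;S,W)$. The desired bound is therefore equivalent to
\begin{equation}
 \alpha\,I(U;W)+I(U;S\mid W) \;\ge\; \alpha-(1-\beta)\bigl(1-H(W\mid S=\lambda)\bigr).
\end{equation}
The key structural observation is that, given $W=w$, the state is effectively \emph{binary}, $S\in\{w,\lambda\}$, because a defect forces $W=S$ while a normal cell can also produce $w$. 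I would use this to lower-bound $I(U;S\mid W)$ after reducing the alphabet of $U$ (a binary $U$ suffices by the standard support-size argument), turning the inequality into a finite-dimensional optimization over $p(u\mid s)$ whose maximum I expect to be attained by the additive-encoding distribution above.

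The main obstacle is precisely this joint optimization: the two nonnegative penalty terms $\alpha\,I(U;W)$ and $I(U;S\mid W)$ in \eqref{eq:conv_identity} cannot be discarded separately, since dropping either yields only the weaker bound $1-\beta$ or the genie bound $(1-\beta)(1-\alpha)=C_{\text{BDEC}}^{\text{max}}$, both of which exceed the target by as much as $\alpha\beta$. One must therefore track the coupling between masking effort and residual state uncertainty throughout. The operational intuition guiding these estimates is that, lacking knowledge of the future erasures, the encoder is forced to mask \emph{every} defect, including the fraction $\alpha\beta$ that is subsequently erased; this wasted masking is exactly the $\alpha\beta$ gap, consistent with $C_{\text{BDEC}}^{\text{enc}}=C_{\text{BDEC}}^{\text{max}}-\alpha\beta=1-\alpha-\beta$.
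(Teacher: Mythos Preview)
Your achievability argument is exactly the paper's: set $U=X$, with $X=S$ on defective cells and $X\sim\text{Bern}(1/2)$ on normal cells, and evaluate $H(U\mid S)-H(U\mid Y)=(1-\beta)-\alpha$. That is literally all Appendix~A contains. The paper never proves the converse: after fixing this particular $U$ it writes ``we should minimize $H(U\mid Y)$\ldots which can be achieved by setting $I(X;Y)=1-\alpha$,'' but this is a computation for the \emph{chosen} $U$, not an argument that no other auxiliary $(U,f)$ in the Gelfand--Pinsker formula can exceed $1-\alpha-\beta$.

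You go well beyond the paper on exactly this point. Your reduction $I(U;Y)=(1-\alpha)I(U;W)$ is correct, and your identity
\[
(1-\alpha)I(U;W)-I(U;S)=(1-\beta)H(W\mid S=\lambda)-\alpha\,I(U;W)-I(U;S\mid W)
\]
is valid (both sides equal $I(U;W\mid S)-\alpha I(U;W)-I(U;S\mid W)$, and $I(U;W\mid S)=H(W\mid S)=(1-\beta)H(W\mid S=\lambda)$ since $W$ is a function of $(U,S)$). You also correctly observe that discarding either penalty term only recovers the looser bounds $1-\beta$ or $(1-\beta)(1-\alpha)$, so the coupling must be tracked.

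The genuine gap is that you stop at ``whose maximum I expect to be attained by the additive-encoding distribution.'' That sentence is not a proof; the finite-dimensional optimization over $p(u\mid s)$ with the two coupled penalties $\alpha I(U;W)$ and $I(U;S\mid W)$ is precisely the work that remains, and you have not carried it out. So your proposal is more honest and more developed than the paper's sketch, but it shares the same unfinished converse. If you want to match the paper, the achievability calculation alone suffices; if you want a complete proof, you still owe the optimization (or an alternative converse, e.g.\ a multiletter Fano argument that directly accounts for the $\alpha\beta$ overlap of defects and erasures).
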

	\begin{proof}
		The proof is given in Appendix \ref{pf:capacity_BDEC}-A. 
	\end{proof}	
	Although $C_{\text{BDEC}}^{\text{enc}} < C_{\text{BDEC}}^{\text{max}} = \widetilde{C}_{\text{BDEC}}$, the difference between $C_{\text{BDEC}}^{\text{enc}}$ and $\widetilde{C}_{\text{BDEC}}$ is only $\alpha \beta$ which is much smaller than $\alpha$ or $\beta$ for $\alpha, \beta \ll 1$. 
	
	For the BDSC, the closed-form expressions for $C_\text{BDSC}^{\text{max}}$ and $C_\text{BDSC}^{\text{enc}}$ are not known. In~\cite{Heegard1983plbc}, only a non-closed-form expression for $C_\text{BDSC}^{\text{enc}}$ was derived. Although we cannot obtain a closed-form expression for $C_\text{BDSC}$, we derive the following bound. 	
	\begin{proposition} \label{thm:capacity_bound_BDSC}
		If only the encoder knows the defect information, the capacity of the BDSC is bounded by
		\begin{equation} \label{eq:capacity_bound_BDSC}
			C_{\text{BDSC}}^{\text{lower}} \le C_{\text{BDSC}}^{\text{enc}} \le C_{\text{BDSC}}^{\text{upper}}			 
		\end{equation}
		where 
		\begin{align} 
			C_{\text{BDEC}}^{\text{lower}} &= 1 - \beta - h(p), \label{eq:capacity_lower_BDSC} \\
            C_{\text{BDEC}}^{\text{upper}} &= \widetilde{C}_{\text{BDSC}} = (1 - \beta)(1 - h(p)). \label{eq:capacity_upper_BDSC} 
		\end{align}		
	\end{proposition}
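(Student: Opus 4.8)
The plan is to prove the two inequalities separately: a genie argument gives the upper bound, and an explicit nested (PLBC) construction gives the lower bound.

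For the upper bound $C_{\text{BDSC}}^{\text{enc}} \le (1-\beta)(1-h(p))$, I would first invoke the standard fact that revealing the defect information $\mathbf{s}$ to the decoder cannot decrease capacity, so $C_{\text{BDSC}}^{\text{enc}} \le C_{\text{BDSC}}^{\text{max}}$. Formally this follows from $I(U;Y) - I(U;S) \le I(U;Y,S) - I(U;S) = I(U;Y \mid S) \le \max_X I(X;Y \mid S)$, the right-hand side being $C_{\text{BDSC}}^{\text{max}}$. It then suffices to evaluate $C_{\text{BDSC}}^{\text{max}}$. Conditioned on $\mathbf{s}$, a cell with $S=\lambda$ behaves as an ordinary BSC($p$), whereas a stuck-at cell ignores its input and outputs a value that is known to the decoder up to Bernoulli($p$) noise; such a cell therefore carries no information about $\mathbf{m}$. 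Since the normal cells form a fraction $1-\beta$ and each contributes $1-h(p)$, this yields $C_{\text{BDSC}}^{\text{max}} = (1-\beta)(1-h(p)) = \widetilde{C}_{\text{BDSC}}$, which is exactly the claimed $C_{\text{BDSC}}^{\text{upper}}$. A pleasant byproduct is that this computation shows the value of $C^{\text{max}}$ is unchanged whether or not the defects themselves suffer transient errors.

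For the lower bound $C_{\text{BDSC}}^{\text{enc}} \ge 1-\beta-h(p)$, I would exhibit a separable PLBC and argue achievability. Allocate $l$ redundant bits to masking and $r$ to error correction, with $l/n \to \beta$ and $r/n \to h(p)$. By the BDC achievability result (capacity $1-\beta$), a masking redundancy of $l \approx \beta n$ suffices to mask the $\approx \beta n$ defects with probability approaching one, i.e.\ to produce $\mathbf{c}$ with $c_i = s_i$ at every defect location. The key observation is that after successful masking $\mathbf{c}\circ\mathbf{s} = \mathbf{c}$, so the BDSC output collapses to $\mathbf{y} = \mathbf{c} + \mathbf{z}$ with $\mathbf{z}$ i.i.d.\ Bernoulli($p$); the effective channel seen by the error-correction layer is then exactly a BSC($p$). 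Choosing the error-correction part of the nested code to be a capacity-achieving BSC($p$) code recovers $\mathbf{c}$, and hence $\mathbf{m}$, with vanishing error probability. The resulting rate is $1 - l/n - r/n \to 1-\beta-h(p)$, which establishes the bound.

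The main obstacle is the rigor of the lower bound, specifically a union-bound over the two failure events, masking failure and post-masking decoding failure. One must argue that the nested structure of the PLBC allows the $l$ masking bits and the $r$ error-correction bits to be chosen so that successful masking does not disturb the BSC code, and that the number of defects and the number of random errors concentrate around $\beta n$ and $pn$ respectively, so that $l \approx \beta n$ and $r \approx h(p)n$ are \emph{simultaneously} sufficient with high probability. The upper bound, by contrast, is essentially a conditional-entropy computation once the genie is introduced and requires little additional work.
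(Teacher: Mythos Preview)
Your proposal is correct, and both halves of it are essentially sound. The route you take, however, differs from the paper's in an interesting way.

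For the upper bound, the paper simply points to $\widetilde{C}_{\text{BDSC}}$, the capacity of the modified channel in which stuck-at cells are \emph{not} hit by BSC noise, and invokes Theorem~\ref{thm:capacity_separate}. Your argument instead gives the decoder the defect information as a genie and then evaluates $C_{\text{BDSC}}^{\text{max}}$ directly by a conditional-capacity computation. These are really the same idea dressed differently: once the decoder knows $S$, the defect positions contribute zero mutual information whether or not they are noisy, so $C_{\text{BDSC}}^{\text{max}} = \widetilde{C}_{\text{BDSC}}^{\text{max}} = (1-\beta)(1-h(p))$. Your version has the small bonus of making explicit that $C_{\text{BDSC}}^{\text{max}}$ admits the closed form $(1-\beta)(1-h(p))$, which the paper's surrounding text leaves open.

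For the lower bound the approaches genuinely diverge. The paper works directly with the Gelfand--Pinsker single-letter formula: it fixes the auxiliary $U=X$, with $X=S$ on defect cells and $X\sim\mathrm{Bern}(1/2)$ on normal cells (this is the ``$\eta=0$'' choice, i.e.\ mask every defect). Then $H(U\mid S)=1-\beta$, and because this choice forces $X\circ S = X$ everywhere one gets $Y=X+Z$ and hence $H(U\mid Y)=h(p)$, yielding $I(U;Y)-I(U;S)=1-\beta-h(p)$ in one line. Your argument is the operational counterpart: build a PLBC with $l/n\to\beta$ and $r/n\to h(p)$, mask first, then decode over the residual BSC. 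The Gelfand--Pinsker route is shorter and sidesteps the union-bound bookkeeping you flag as the main obstacle; your constructive route is longer but ties the bound directly to the PLBC machinery used in the rest of the paper, which is pedagogically valuable here. Either argument is acceptable.
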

	\begin{proof}
		The proof is given in Appendix \ref{pf:capacity_bound_BDSC}-B.
	\end{proof}	
	From \eqref{eq:capacity_lower_BDSC} and \eqref{eq:capacity_upper_BDSC}, we can see that the difference between the upper bound and lower bound is only $\beta h(p)$. For $\beta, p \ll 1$, $\beta h(p)$ is much smaller than $\beta$ and $h(p)$. 	
	

\subsection{Partitioned Linear Block Codes (PLBCs)} \label{subsec:PLBC}

	In \cite{Heegard1983plbc}, Heegard proposed the PLBCs which can be viewed as nested codes~\cite{Zamir2002nested}. PLBCs are capable of correcting both stuck-at errors (due to stuck-at defects) and transient errors. Note that the PLBC can correct transient errors occurring in both stuck-at defects and normal cells. An $[n,k,l]$ PLBC is a pair of linear subspaces ${\mathcal{C}}_1 \subset \{0, 1\}^n$ and ${\mathcal{C}}_0 \subset \{0, 1\}^n$ of dimension $k$ and $l$ such that ${\mathcal{C}}_1 \cap {\mathcal{C}}_0 =\{ \mathbf{0}\}$. Then the direct sum
	\begin{equation}\label{eq:direct_sum}
	{\mathcal{C}} \triangleq {\mathcal{C}}_1 + {\mathcal{C}}_0 = \{ {\mathbf{c}} = {\mathbf{c}}_1 + {\mathbf{c}}_0 | {\mathbf{c}}_1 \in {\mathcal{C}}_1 , {\mathbf{c}}_0 \in {\mathcal{C}}_0 \}. 
	\end{equation}
	is an $[n, k+l]$ linear block code with a generator matrix $\widetilde{G} = \left[ G_1 \quad G_0 \right]$. Note that the sizes of $G_1$ and $G_0$ are $n \times k$ and $n \times l$, respectively. An example of PLBC can be found in~\cite{Heegard1983plbc}.  
	
	The encoding and decoding are described as follows.		
	\emph{Encoding:} A message $\mathbf{m}$ is encoded to a corresponding codeword $\mathbf{c} = \mathbf{c}_1 + \mathbf{c}_0 = G_1\mathbf{m} + G_0\mathbf{d}$. First, we encode $\mathbf{m}$ into $G_1 \mathbf{m}$ for correcting transient errors. Next, the stuck-at defects are masked by $G_0 \mathbf{d}$ where $\mathbf{d}$ is chosen carefully to mask stuck-at defects. 
	
	\emph{Decoding:} Let the received vector be $\mathbf{y} = \mathbf{c} \circ \mathbf{s} + \mathbf{z}$ according to~\eqref{eq:BDSC_vector}. The decoder computes the syndrome ${\mathbf{v}} = \widetilde{H}^T {\mathbf{y}}$ (superscript $T$ denotes transpose) where $\widetilde{H}$ is the parity check matrix of $\mathcal{C}$ such that $\widetilde{H}^T \widetilde{G} = 0_{r, k+l}$ (the $r \times (k+l)$ all-zero matrix). Based on the syndrome $\mathbf{v}$, the decoder chooses $\widehat{\mathbf{z}}$ (i.e., the estimate of $\mathbf{z}$) such that $\widetilde{H}^T \widehat{\mathbf{z}} = \mathbf{v}$. Then $\widehat{\mathbf{m}} =  \widetilde{G}_1^T \widehat{\mathbf{c}}$ where $\widehat{\mathbf{c}}= \mathbf{y} + \widehat{\mathbf{z}}$. The message inverse matrix $\widetilde{G}̃_1$ is defined as an $n \times k$ matrix such that $\widetilde{G}̃_1^{T} G_1 =I_k$ and $\widetilde{G}̃_1^{T} G_0 =0_{k, l}$.

	
	\begin{definition}
		A pair of minimum distances $\left(d_0, d_1 \right)$ of an $[n, k, l]$ PLBC are given by \cite{Heegard1983plbc}. 
		\begin{align} \label{eq:PLBC_d0}
		d_0 = \underset{
			\substack{
				{\mathbf{c}} \ne {\mathbf{0}} \\
				G_0^T {\mathbf{c}}= {\mathbf{0}}
			}}
			{\text{min }} \|{\mathbf{c}}\|, \:\:
			d_1 = \underset{
				\substack{
					\mathbf{m}  \ne \mathbf{0} \\
					\widetilde{H}^T \mathbf{c}= \mathbf{0}}}
			{\text{min }} \|\mathbf{c}\|
			\end{align}
			where $\| \cdot \|$ is the Hamming weight of the given vector. 
	\end{definition}
	Note that $d_1$ is greater than or equal to the minimum distance of the $[n,k+l]$ linear block code with the parity check matrix $\widetilde{H}$, while $d_0$ is the minimum distance of the $[n,k+r]$ linear block code with the parity check matrix $G_0$~\cite{Heegard1983plbc}. 
	
%
%
%

	It is critical to choose the proper $\mathbf{d}$ during the encoding stage of PLBC. The minimum distance encoding (MDE) chooses $\mathbf{d}$ as follows. 
	\begin{align}
		\mathbf{d}^* & =\underset{\mathbf{d}}{\text{argmin }} \|\mathbf{c} \circ \mathbf{s} - \mathbf{c} \| =  \underset{\mathbf{d}}{\text{argmin }} \left\| \mathbf{s}^{\mathcal{U}} + G_0^{\mathcal{U}} \mathbf{d} +  G_1^{\mathcal{U}} \mathbf{m}   \right\| \nonumber \\
		&= \underset{\mathbf{d}}{\text{argmin }} \left\| G_0^{\mathcal{U}} \mathbf{d} + \mathbf{b}^{\mathcal{U}} \right\| \label{eq:BDC_opt_problem}
	\end{align}
	where $\| {\mathbf{c}} \circ {\mathbf{s}} - {\mathbf{c}} \|$ represents the number of errors due to stuck-at defects and ${\mathbf{b}} = G_1 {\mathbf{m}} + {\mathbf{s}}$. Also, ${\mathcal{U}}=\left\{i_1,\ldots,i_u \right\}$ indicates the set of locations of stuck-at defects. We use the notation of ${\mathbf{s}}^{\mathcal{U}}=\left(s_{i_1},\ldots,s_{i_u}\right)^T$, $G_0^{\mathcal{U}}=\left[{\mathbf{g}}_{0,i_1}^T,\ldots,{\mathbf{g}}_{0,i_u}^T \right]^T$, and $G_1^{\mathcal{U}}=\left[{\mathbf{g}}_{1,i_1}^T,\ldots,{\mathbf{g}}_{1,i_u}^T \right]^T$ where ${\mathbf{g}}_{0,i}$ and ${\mathbf{g}}_{1,i}$ are the $i$-th rows of $G_0$ and $G_1$ respectively. 
	
	By solving the optimization problem in \eqref{eq:BDC_opt_problem}, we can minimize the number of errors due to stuck-at defects. It was shown~\cite{Heegard1983plbc} that the capacity of the BDSC can be achieved by the MDE and the maximum likelihood decoding (MLD). However, the encoding complexity of the MDE is ${\mathcal{O}}(2^u)$. Instead of solving the exponential complexity optimization problem, we just try to solve the following linear equation.
	\begin{equation}\label{eq:BDC_LE}
	G_0^{\mathcal{U}} {\mathbf{d}} = {\mathbf{b}}^{\mathcal{U}}
	\end{equation}
	Gaussian elimination or some other linear equation solution methods can be used to solve \eqref{eq:BDC_LE} with ${\mathcal{O}}\left(u^3\right)$ complexity. If the encoder fails to find a solution of \eqref{eq:BDC_LE}, then encoding failure is declared. It is clear that \eqref{eq:BDC_LE} has at least one solution if and only if
	\begin{equation} \label{eq:BDC_LE_sol_exist}
	\rank \left( G_0^{\mathcal{U}} \right) = \rank \left( G_0^{\mathcal{U}} \mid \mathbf{b}^{\mathcal{U}} \right)
	\end{equation}
	where $\left( G_0^{\mathcal{U}} \mid \mathbf{b}^{\mathcal{U}} \right)$ denotes the augmented matrix. If $u<d_{0}$, $\rank\left(G_0^{\mathcal{U}}\right)$ is always $u$ by \eqref{eq:PLBC_d0}. Hence, \eqref{eq:BDC_LE_sol_exist} holds and at least one solution $\mathbf{d}$ exists. If $u \ge d_0$, the encoder may fail to find a solution of \eqref{eq:BDC_LE}. 
	
	For convenience, we define a random variable $E$ as follows.
	\begin{equation} \label{eq:BDC_E}
	E =
	\begin{cases}
	1, & \|\mathbf{c} \circ \mathbf{s} - \mathbf{c} \| = 0 \text{ (encoding success)} \\
	0, & \|\mathbf{c} \circ \mathbf{s} - \mathbf{c} \| \ne 0 \text{ (encoding failure)}
	\end{cases}
	\end{equation}	
	
	We observe that the probability of encoding failure $P(E=0)$ by solving \eqref{eq:BDC_opt_problem} is the same as $P(E=0)$ by solving \eqref{eq:BDC_LE}. It is because $G_0^{\mathcal{U}} \mathbf{d} \ne  \mathbf{b}^{\mathcal{U}}$ if and only if $\|{\mathbf{c}} \circ {\mathbf{s}} - {\mathbf{c}} \| \ne 0$. Although solving \eqref{eq:BDC_LE} is suboptimal in regards to the number of stuck-at errors (i.e., $\|{\mathbf{c}} \circ {\mathbf{s}} - {\mathbf{c}} \|$), $C_{\text{BDC}}$ can be achieved by solving \eqref{eq:BDC_LE} instead of \eqref{eq:BDC_opt_problem}, which is shown in the following Proposition by using the results of~\cite{Heegard1983plbc}. 
	
	\begin{proposition} \label{thm:BDC_capacity}
		$C_{\text{BDC}}$ can be achieved by solving \eqref{eq:BDC_LE}.
	\end{proposition}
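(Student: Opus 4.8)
The plan is to reduce the claim to Heegard's achievability result for minimum distance encoding (MDE) by showing that the \emph{encoding-failure event} is identical for the two encoders, and that for the pure BDC a recovery failure occurs exactly when masking fails. First I would observe that in the BDC there are no transient errors, so the channel output is $\mathbf{y} = \mathbf{c}\circ\mathbf{s}$ and all redundancy is devoted to masking ($r=0$, $n=k+l$). If masking succeeds, i.e. $E=1$ in \eqref{eq:BDC_E}, then $\mathbf{y}=\mathbf{c}$ and the decoder returns $\widehat{\mathbf{m}} = \widetilde{G}_1^{T}\mathbf{y} = \widetilde{G}_1^{T}(G_1\mathbf{m}+G_0\mathbf{d}) = \mathbf{m}$ by the defining properties $\widetilde{G}_1^{T}G_1=I_k$ and $\widetilde{G}_1^{T}G_0=0_{k,l}$. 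Conversely, if $E=0$ the encoder declares encoding failure, which is counted as a recovery failure. Hence $P(\widehat{\mathbf{m}}\ne\mathbf{m}) = P(E=0)$, and achieving $C_{\mathrm{BDC}}$ in \eqref{eq:BDC_capacity} is equivalent to driving $P(E=0)$ to zero at rates approaching $1-\beta$.

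Next I would invoke the result of Heegard that MDE together with maximum likelihood decoding achieves $C_{\mathrm{BDC}}$: there is a sequence of PLBCs with $l/n\to\beta$, hence $R=1-l/n\to 1-\beta$, for which the MDE encoding-failure probability vanishes. The key link, already noted before the statement, is that the event $\{E=0\}$ does not depend on which encoder is used: $\min_{\mathbf{d}}\|\mathbf{c}\circ\mathbf{s}-\mathbf{c}\| = 0$ if and only if the linear system \eqref{eq:BDC_LE} admits a solution, i.e. if and only if the rank condition \eqref{eq:BDC_LE_sol_exist} holds. Since MDE attains the minimum in \eqref{eq:BDC_opt_problem}, MDE masks perfectly precisely when \eqref{eq:BDC_LE} is solvable, and the solver of \eqref{eq:BDC_LE} declares success on exactly the same set of $(\mathbf{s},\mathbf{m})$ realizations. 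Therefore the two encoders induce the same encoding-failure probability, and solving \eqref{eq:BDC_LE} inherits $P(E=0)\to 0$ from MDE, so it achieves $C_{\mathrm{BDC}}$.

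The main obstacle is making the equivalence of the two failure events airtight, since the encoders behave very differently once perfect masking is impossible: MDE still minimizes the number of residual stuck-at errors, whereas the linear-equation solver simply declares failure. I would stress that this discrepancy is irrelevant for capacity, because the achievability argument only needs the probability that \emph{perfect} masking is attainable, and that probability is governed solely by the solvability of \eqref{eq:BDC_LE}. As an independent check, and to make the argument self-contained rather than a pure reduction to Heegard, I would sketch the random-coding estimate behind the achievability: for a random $G_0$ and a defect set $\mathcal{U}$ of size $u$, the submatrix $G_0^{\mathcal{U}}$ is a random $u\times l$ matrix over $\mathrm{GF}(2)$, which has full row rank with probability $\prod_{i=0}^{u-1}\bigl(1-2^{i-l}\bigr)$; when $l/n\to\beta^{+}$ the number of defects concentrates at $u\approx n\beta<l$, so $l-u\to\infty$ and this probability, and hence $1-P(E=0)$, tends to $1$. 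This confirms that the suboptimal but polynomial-time solver of \eqref{eq:BDC_LE} loses nothing asymptotically relative to MDE.
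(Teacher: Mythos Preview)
Your primary reduction to Heegard's MDE achievability has a gap. Heegard's theorem controls $P(\widehat{\mathbf{m}}\ne\mathbf{m})$ for MDE together with MLD, not the encoding-failure probability $P(E=0)$. For MDE the inequality goes the wrong way for your purposes: even when perfect masking is impossible ($E=0$), MDE still outputs a codeword and the decoder may nonetheless recover $\mathbf{m}$---on the BDSC because residual stuck-at errors are absorbed by the random-error budget, and even on the pure BDC with $r=0$ because $\widehat{\mathbf{m}}=\mathbf{m}$ whenever the stuck-at error vector happens to lie in $\mathcal{C}_0=\ker(\widetilde{G}_1^{T})$. Thus $P(\widehat{\mathbf{m}}\ne\mathbf{m})\le P(E=0)$ for MDE, and knowing the left side vanishes does not force the right side to vanish. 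Your sentence ``there is a sequence of PLBCs \ldots\ for which the MDE encoding-failure probability vanishes'' is precisely the assertion that needs proof, not something you can lift from Heegard's capacity statement.

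That said, your ``independent check'' is a correct proof and is essentially what the paper does. The paper takes $G_0$ with i.i.d.\ uniform entries, concentrates $|\mathcal{U}|$ in $[n(\beta-\epsilon),\,n(\beta+\epsilon)]$, and bounds $P\bigl(\rank(G_0^{\mathcal{U}})<u\bigr)$ for a uniformly random $u\times l$ binary matrix, obtaining
\[
P(E=0)\;\le\;(2n\epsilon+1)\,2^{\,n\left(\frac{k}{n}-(1-\beta)+\epsilon\right)}+\epsilon',
\]
which tends to zero whenever $R=k/n<1-\beta-\epsilon$. Your product formula $\prod_{i=0}^{u-1}(1-2^{i-l})$ and the concentration $u\approx n\beta<l$ lead to the same conclusion. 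So the proposal is ultimately sound, but the route that actually carries the argument is your random-coding sketch, not the reduction; the paper skips the reduction entirely and argues directly in this way.
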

	\begin{proof}The proof is given in Appendix~\ref{pf:BDC_capacity}-C. 
	\end{proof}
	
	We note that Dumer proposed an asymptotically optimal code for the BDC with encoding complexity ${\mathcal{O}}\left(n \log_2^3{n}\right)$~\cite{Dumer1990}. In this paper, we are focusing on Heegard's partitioned codes with encoding by solving linear equation of \eqref{eq:BDC_LE}. It is because this linear equation approach allows to derive the upper bound on encoding failure probability based on linear algebra. This upper bound plays a pivotal role in estimating the optimal redundancy allocation for the BDEC and the BDSC. Moreover, for a finite-length $n$ and a reasonable defect probability $\beta$, $u \simeq \beta n$ is not a large number, hence, the computational complexity ${\mathcal{O}}(u^3)$ is not high.   
	


\section{\uppercase{Redundancy Allocation of Finite-Length PLBC}} \label{sec:ra}

	In this section, we investigate the redundancy allocation for finite-length PLBC. In order to clarify the redundancy allocation problem for finite-length PLBC, we define a pair of code rates $\left(R_1, R_0\right)$ where $R_0 = \frac{l}{n}$ is the code rate of ${\mathcal{C}}_0$ in \eqref{eq:direct_sum}. Also, $R_1 = \frac{k}{n}$ is the code rate of ${\mathcal{C}}_1$, which is equivalent to the actual code rate $R = \frac{k}{n}$. For the given codeword size $n$, we can readily calculate $(l, r)$ from $\left(R_1, R_0\right)$. 
	
	\begin{figure}[!t]
		\vspace{-2mm}
		\centering
		\includegraphics[width=0.30\textwidth]{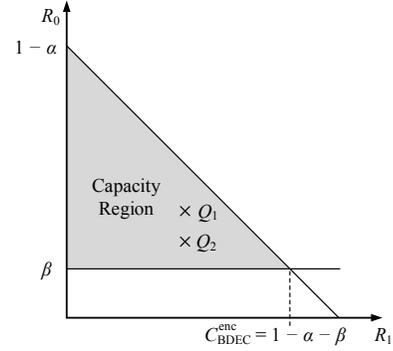}
		\vspace{-3mm}
		\caption{Capacity region of the BDEC derived in Proposition~\ref{thm:BDEC_region}. Two points of $Q_1$ and $Q_2$ in the capacity region represent the pairs of code rates $\left(R = R_1 = \frac{k}{n}, R_0 = \frac{l}{n}\right)$. Note that they have different redundant allocations $(l, r)$ in spite of the same actual code rates $R = R_1 = \frac{k}{n}$.}
		\label{fig:BDEC_region}
		\vspace{-5mm}
	\end{figure}

	For the BDEC, we will derive the capacity region $\mathbb{C}_{\text{BDEC}}$ in Theorem~\ref{thm:BDEC_region}. If $(R_1, R_0) \in \mathbb{C}_{\text{BDEC}}$, then $P(\widehat{\mathbf{m}} \ne \mathbf{m})$ converges to zero as $n \rightarrow \infty$. In Fig.~\ref{fig:BDEC_region}, $Q_1$ and $Q_2$ represent two pairs of code rates in ${\mathbb{C}}_{\text{BDEC}}$ with the same $R_1 = R$. Then, $Q_1$ and $Q_2$ have the same total redundancy $n - k = l + r$ whereas they have different redundancy allocations $(l, r)$. Note that $Q_1$ allocates more redundancy for defects (i.e., larger $l$) than $Q_2$. 
	
	Asymptotically, both $Q_1$ and $Q_2$ make $P(\widehat{\mathbf{m}} \ne \mathbf{m})$ converge to zero. On the other hand, $P(\widehat{\mathbf{m}} \ne \mathbf{m})$ of $Q_1$ and $Q_2$ can be significantly different for the finite-length codes, which leads to the need for formulating and solving the redundancy allocation problem in \eqref{eq:opt_prob}. 
	
	In order to choose the optimal redundancy allocation $(l^*, r^*)$ of \eqref{eq:opt_prob}, we should derive the $P(\widehat{\mathbf{m}} \ne \mathbf{m})$ which depends on the channel parameters (i.e., $\beta$, $\alpha$, $p$) as well as the code parameters $(n, k, l)$.  In the following subsection, we will derive the upper bound on encoding failure probability for finite-length codes, which is important for deriving the estimate of $P(\widehat{\mathbf{m}} \ne \mathbf{m})$. 
	
\subsection{Upper Bound on Encoding Failure Probability}
	
	Since we focus on the redundancy allocation problem in finite-length codes, we derive an upper bound on $P(E=0)$ for finite $n$. During the encoding stage, $\mathbf{d}$ is chosen by solving the linear equation \eqref{eq:BDC_LE} instead of the optimization problem \eqref{eq:BDC_opt_problem}. 
	
	\begin{lemma}\label{lemma:BDC_ef_UB_fixed_u} An upper bound on the probability of encoding failure given $u$ defects is given by
	\begin{equation}\label{eq:enc_fail_upper}
	P\left(E=0 \mid U=u \right) \le \min \left\{ \frac{\sum_{w=d_{0}}^{u}{B_{0, w} \binom{n-w}{u-w}}}{\binom{n}{u}}, 1\right\}
	\end{equation}
	where the random variable $U$ represents the number of stuck-at defects. In addition, $B_{0, w}$ is the weight distribution of ${\mathcal{C}}_{0}^{\perp}$ (i.e., the dual code of ${\mathcal{C}}_0$).
	\end{lemma}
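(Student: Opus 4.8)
The plan is to upper-bound the encoding-failure event by the purely combinatorial event that the defect rows of $G_0$ become linearly dependent, and then apply a union bound over the dual code $\mathcal{C}_0^{\perp}$. The key initial observation is that encoding by \eqref{eq:BDC_LE} fails \emph{only} when $G_0^{\mathcal{U}}$ is row-rank deficient. Indeed, the $u \times l$ matrix $G_0^{\mathcal{U}}$ has at most $u$ rows, so whenever $\rank\!\left(G_0^{\mathcal{U}}\right) = u$ we automatically have $\rank\!\left(G_0^{\mathcal{U}} \mid \mathbf{b}^{\mathcal{U}}\right) = u$ as well, and by \eqref{eq:BDC_LE_sol_exist} the system admits a solution for \emph{every} right-hand side $\mathbf{b}^{\mathcal{U}}$. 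Consequently $\{E=0\} \subseteq \{\rank(G_0^{\mathcal{U}}) < u\}$, giving
\[
P\!\left(E=0 \mid U=u\right) \le P\!\left(\rank\!\left(G_0^{\mathcal{U}}\right) < u \;\middle|\; U=u\right).
\]
This passage to an inequality (rather than equality) is exactly where the bound loses tightness, since a rank-deficient $G_0^{\mathcal{U}}$ still succeeds for those particular $\mathbf{b}^{\mathcal{U}}$ lying in its column space; I would flag this as the one conceptual step, the rest being routine.

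Next I would translate row-rank deficiency into a statement about $\mathcal{C}_0^{\perp}$. The rows $\{\mathbf{g}_{0,i}\}_{i\in\mathcal{U}}$ are linearly dependent precisely when there is a nonzero $\mathbf{c}\in\{0,1\}^n$ supported inside $\mathcal{U}$ with $\sum_{i} c_i \mathbf{g}_{0,i} = G_0^T\mathbf{c} = \mathbf{0}$. By the defining condition in \eqref{eq:PLBC_d0}, such a $\mathbf{c}$ is exactly a nonzero codeword of $\mathcal{C}_0^{\perp}$ whose support satisfies $\mathrm{supp}(\mathbf{c}) \subseteq \mathcal{U}$. Hence the event $\{\rank(G_0^{\mathcal{U}}) < u\}$ coincides with $\{\exists\, \mathbf{c}\in\mathcal{C}_0^{\perp}\setminus\{\mathbf{0}\}:\ \mathrm{supp}(\mathbf{c})\subseteq\mathcal{U}\}$, which is the bridge to the weight enumerator $B_{0,w}$.

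Finally I would take a union bound over the finitely many nonzero dual codewords, grouped by Hamming weight. Under the symmetric defect model, conditioning on $U=u$ makes the defect set $\mathcal{U}$ uniform over the $\binom{n}{u}$ subsets of size $u$; a fixed codeword of weight $w$ therefore satisfies $\mathrm{supp}(\mathbf{c})\subseteq\mathcal{U}$ with probability $\binom{n-w}{u-w}/\binom{n}{u}$, obtained by choosing the remaining $u-w$ defects among the $n-w$ non-support coordinates. Summing over the $B_{0,w}$ codewords of each weight, and noting that no nonzero dual codeword has weight below $d_0$ while $\binom{n-w}{u-w}=0$ once $w>u$, yields
\[
P\!\left(\rank\!\left(G_0^{\mathcal{U}}\right) < u \;\middle|\; U=u\right) \le \frac{\sum_{w=d_0}^{u} B_{0,w}\binom{n-w}{u-w}}{\binom{n}{u}}.
\]
Capping the right-hand side at $1$, since it bounds a probability, gives \eqref{eq:enc_fail_upper}. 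In summary, the substantive content is the reduction $\{E=0\}\subseteq\{\rank(G_0^{\mathcal{U}})<u\}$ together with its reinterpretation via $\mathcal{C}_0^{\perp}$; the concluding union bound and the binomial counting are mechanical.
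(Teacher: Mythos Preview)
Your proof is correct and follows essentially the same approach as the paper: reduce $\{E=0\}$ to row-rank deficiency of $G_0^{\mathcal{U}}$, reinterpret this as the existence of a nonzero dual codeword of $\mathcal{C}_0^{\perp}$ supported in $\mathcal{U}$, and apply a union bound over dual codewords grouped by weight. The only difference is that the paper first derives the exact expression $P(E=0\mid U=u)=\sum_{j\ge 1}(1-2^{-j})\,P(\rank G_0^{\mathcal{U}}=u-j\mid U=u)$, which yields the two-sided bound $\tfrac{1}{2}P(\rank<u)\le P(E=0)\le P(\rank<u)$ and is reused in the next lemma, whereas you go straight to the upper inclusion $\{E=0\}\subseteq\{\rank<u\}$; for the present lemma this shortcut is entirely sufficient.
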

	\begin{proof}
		The proof is given in Appendix~\ref{pf:BDC_ef_UB_fixed_u}-D.
	\end{proof}	
	
	Lemma~\ref{lemma:BDC_ef_UB_fixed_u} supports that $P\left(E=0 \mid U=u \right) = 0$ for $u < d_{0}$. The following Lemma states that $P\left(E=0 \mid U=u \right)$ can be obtained exactly for $d_0 \le u \le d_0 + \left\lfloor \frac{d_0-1}{2} \right\rfloor$ where $\lfloor x \rfloor$ is the largest integer not greater than $x$.
	
	\begin{lemma}\label{lemma:BDC_ef_exact}For $u \le d_0 + \left\lfloor \frac{d_0 - 1}{2} \right\rfloor$, $P\left(E=0 \mid U=u \right)$ is given by
	\begin{equation}
	\label{eq:BDC_ef_exact}
	P\left(E=0 \mid U=u \right) = \frac{1}{2} \cdot \frac{\sum_{w=d_{0}}^{u}{B_{0, w} \binom{n-w}{u-w}}}{\binom{n}{u}}.
	\end{equation}
	\end{lemma}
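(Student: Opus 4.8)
The plan is to upgrade the union bound of Lemma~\ref{lemma:BDC_ef_UB_fixed_u} to an exact identity by proving that, throughout the stated range of $u$, at most one ``offending'' dual codeword can ever lie inside the defect set, so that the events in the union bound are mutually exclusive and each one contributes a failure probability of exactly $\tfrac12$. I would begin from the solvability criterion \eqref{eq:BDC_LE_sol_exist}: the system $G_0^{\mathcal{U}}\mathbf{d}=\mathbf{b}^{\mathcal{U}}$ is infeasible precisely when $\mathbf{b}^{\mathcal{U}}$ fails to be orthogonal to the left null space of $G_0^{\mathcal{U}}$, i.e. when there is some $\mathbf{a}\in\{0,1\}^u$ with $\mathbf{a}^{T}G_0^{\mathcal{U}}=\mathbf{0}$ and $\mathbf{a}^{T}\mathbf{b}^{\mathcal{U}}=1$. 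Zero-padding $\mathbf{a}$ outside $\mathcal{U}$ yields a vector $\mathbf{c}\in\{0,1\}^n$ with $G_0^{T}\mathbf{c}=\mathbf{0}$ and $\mathrm{supp}(\mathbf{c})\subseteq\mathcal{U}$, that is, a nonzero codeword of ${\mathcal{C}}_0^{\perp}$ supported inside $\mathcal{U}$ with $\mathbf{c}^{T}\mathbf{b}=1$. Hence encoding fails if and only if some nonzero $\mathbf{c}\in{\mathcal{C}}_0^{\perp}$ with $\mathrm{supp}(\mathbf{c})\subseteq\mathcal{U}$ satisfies $\mathbf{c}^{T}\mathbf{b}=1$.

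The crux is a counting bound. Let $N_u$ denote the number of nonzero codewords of ${\mathcal{C}}_0^{\perp}$ whose support lies in $\mathcal{U}$. I claim $N_u\le 1$ for every $\mathcal{U}$ when $u\le d_0+\lfloor(d_0-1)/2\rfloor$. Indeed, if two distinct nonzero codewords $\mathbf{c}_1,\mathbf{c}_2\in{\mathcal{C}}_0^{\perp}$ both had support in $\mathcal{U}$, then $\mathbf{c}_1,\mathbf{c}_2,\mathbf{c}_1+\mathbf{c}_2$ would be three distinct nonzero codewords, each of weight at least $d_0$ and all supported inside $\mathcal{U}$, whence
\begin{equation}
\|\mathbf{c}_1\|+\|\mathbf{c}_2\|+\|\mathbf{c}_1+\mathbf{c}_2\| = 2\,\big|\mathrm{supp}(\mathbf{c}_1)\cup\mathrm{supp}(\mathbf{c}_2)\big| \le 2u .
\end{equation}
This forces $3d_0\le 2u$, contradicting $u\le d_0+\lfloor(d_0-1)/2\rfloor<\tfrac{3d_0}{2}$ (checked separately for even and odd $d_0$). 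Thus $N_u\in\{0,1\}$ in this range.

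Finally I would assemble the probability by conditioning on $\mathcal{U}$. If $N_u=0$ the system is always solvable and encoding succeeds; if $N_u=1$ with unique offending codeword $\mathbf{c}$, failure occurs iff $\mathbf{c}^{T}\mathbf{b}=1$. Since the stuck-at values are independent and uniform on $\{0,1\}$, the vector $\mathbf{b}^{\mathcal{U}}=G_1^{\mathcal{U}}\mathbf{m}+\mathbf{s}^{\mathcal{U}}$ is uniform on $\{0,1\}^u$ for every $\mathbf{m}$, so $\mathbf{c}^{T}\mathbf{b}=1$ holds with probability exactly $\tfrac12$, independently of $\mathcal{U}$. Therefore, using $N_u\in\{0,1\}$,
\begin{equation}
P(E=0\mid U=u)=\tfrac12\,P(N_u=1)=\tfrac12\,\mathbb{E}[N_u].
\end{equation}
By linearity of expectation over the uniformly random $u$-subset $\mathcal{U}$, a fixed weight-$w$ codeword has its support inside $\mathcal{U}$ with probability $\binom{n-w}{u-w}/\binom{n}{u}$, so $\mathbb{E}[N_u]=\sum_{w=d_0}^{u}B_{0,w}\binom{n-w}{u-w}/\binom{n}{u}$, which gives exactly \eqref{eq:BDC_ef_exact}. (The $\min\{\cdot,1\}$ of Lemma~\ref{lemma:BDC_ef_UB_fixed_u} is inactive here because $\mathbb{E}[N_u]\le 1$.)

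The main obstacle is the combinatorial step $N_u\le 1$: one has to recognize that the right quantity to bound is the triple of weights $\|\mathbf{c}_1\|,\|\mathbf{c}_2\|,\|\mathbf{c}_1+\mathbf{c}_2\|$, whose sum equals $2\,|\mathrm{supp}(\mathbf{c}_1)\cup\mathrm{supp}(\mathbf{c}_2)|$, and then to verify that $d_0+\lfloor(d_0-1)/2\rfloor$ is precisely the largest integer $u$ satisfying $2u<3d_0$, i.e. the sharp threshold below which two distinct dual codewords cannot share a common $u$-set. Everything else---the Fredholm-type solvability criterion and the uniformity of $\mathbf{b}^{\mathcal{U}}$ producing the factor $\tfrac12$---is routine.
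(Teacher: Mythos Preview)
Your proof is correct and follows essentially the same approach as the paper's: both arguments hinge on showing that for $u\le d_0+\lfloor(d_0-1)/2\rfloor$ at most one nonzero codeword of ${\mathcal C}_0^{\perp}$ can be supported inside $\mathcal{U}$, via the inequality $\|\mathbf{c}_1\|+\|\mathbf{c}_2\|+\|\mathbf{c}_1+\mathbf{c}_2\|\ge 3d_0$ together with $\|\mathbf{c}_1\|+\|\mathbf{c}_2\|+\|\mathbf{c}_1+\mathbf{c}_2\|=2|\mathrm{supp}(\mathbf{c}_1)\cup\mathrm{supp}(\mathbf{c}_2)|\le 2u$, and then use the uniformity of $\mathbf{b}^{\mathcal{U}}$ to obtain the factor $\tfrac12$. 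The only cosmetic difference is that the paper routes the computation through $\rank(G_0^{\mathcal{U}})$ and the identity \eqref{eq:BDC_exact}, whereas you phrase it via the Fredholm solvability criterion and $\mathbb{E}[N_u]$; these are equivalent since $N_u\in\{0,1\}$ is precisely the statement $\rank(G_0^{\mathcal{U}})\in\{u,u-1\}$.
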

	\begin{proof}
	The proof is given in Appendix~\ref{pf:BDC_ef_exact}-E.
	\end{proof}

	From the definition of $d_0$ in \eqref{eq:PLBC_d0}, Lemma~\ref{lemma:BDC_ef_UB_fixed_u}, and Lemma~\ref{lemma:BDC_ef_exact}, we can state the following theorem.

	\begin{theorem}\label{thm:enc_fail_bound_fixed_u} $P\left(E = 0 \mid U=u\right)$ is given by		
		\begin{numcases}{\hspace{-6mm}}
		\hspace{-1mm}0, &\hspace{-6mm} for $u < d_0$ \label{eq:BDC_ef_exact_0}
		\\
		\hspace{-1mm}\frac{1}{2} \cdot \frac{\sum_{w=d_{0}}^{u}{B_{0, w} \binom{n-w}{u-w}}}{\binom{n}{u}},  &\hspace{-6mm} for $d_0 \le u \le d_0 + t_0$ \label{eq:BDC_ef_exact_1}
		\\
		\hspace{-1mm}\le \min \left\{ \frac{\sum_{w=d_{0}}^{u}{B_{0, w} \binom{n-w}{u-w}}}{\binom{n}{u}}, 1\right\}\hspace{-1mm}, &\hspace{-6mm} for $ u > d_0 + t_0$. \label{eq:BDC_ef_bound}
		\end{numcases}
		where $t_0 = \left\lfloor \frac{d_0 - 1}{2} \right\rfloor$.
	\end{theorem}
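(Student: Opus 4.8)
The plan is to prove the theorem by partitioning the support of $U$ into the three disjoint intervals $\{u < d_0\}$, $\{d_0 \le u \le d_0 + t_0\}$, and $\{u > d_0 + t_0\}$, and invoking the appropriate already-established result on each piece. Since these intervals tile $\{0, 1, \ldots, n\}$ with neither gap nor overlap, verifying the claimed expression on each piece yields the full statement.

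For the first interval, $u < d_0$, I would argue directly from the definition of $d_0$ in \eqref{eq:PLBC_d0}. When fewer than $d_0$ cells are stuck, every choice of defect set $\mathcal{U}$ with $|\mathcal{U}| = u$ yields a submatrix $G_0^{\mathcal{U}}$ of full row rank $u$: otherwise some nonzero vector $\mathbf{c}$ with $G_0^T \mathbf{c} = \mathbf{0}$ supported within $\mathcal{U}$, hence of weight at most $u < d_0$, would exist, contradicting \eqref{eq:PLBC_d0}. Consequently the rank condition \eqref{eq:BDC_LE_sol_exist} holds for every right-hand side $\mathbf{b}^{\mathcal{U}}$, so \eqref{eq:BDC_LE} is always solvable and encoding never fails; hence $P(E = 0 \mid U = u) = 0$, which is \eqref{eq:BDC_ef_exact_0}. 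This is also consistent with Lemma~\ref{lemma:BDC_ef_UB_fixed_u}, whose summation $\sum_{w=d_0}^{u}$ is empty for $u < d_0$ and therefore already forces the bound to zero.

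For the second interval, $d_0 \le u \le d_0 + t_0$ with $t_0 = \lfloor (d_0 - 1)/2 \rfloor$, the claim \eqref{eq:BDC_ef_exact_1} is exactly the conclusion of Lemma~\ref{lemma:BDC_ef_exact}, so I would simply cite that lemma; no additional work is needed beyond noting that its hypothesis $u \le d_0 + \lfloor (d_0 - 1)/2 \rfloor$ coincides with the present range, and that for $u < d_0$ its formula collapses to the zero value of the first case, confirming agreement at the boundary $u = d_0$. For the third interval, $u > d_0 + t_0$, the exact count breaks down because several dual codewords of $\mathcal{C}_0$ can simultaneously obstruct solvability, so only the union-type bound of Lemma~\ref{lemma:BDC_ef_UB_fixed_u} survives; invoking it gives \eqref{eq:BDC_ef_bound}, where the truncation at $1$ is needed precisely because a union bound can exceed unity.

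I do not expect a genuine obstacle here: the theorem is an assembly of Lemmas~\ref{lemma:BDC_ef_UB_fixed_u} and \ref{lemma:BDC_ef_exact} together with the rank argument. The only point requiring care is bookkeeping at the interval boundaries, namely verifying that the three cases partition the range of $u$ exactly and that the expressions mesh where the intervals meet (in particular that Lemma~\ref{lemma:BDC_ef_exact}, stated for all $u \le d_0 + t_0$, reproduces the zero value throughout $u < d_0$ and thereby joins cleanly with the first case).
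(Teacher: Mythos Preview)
Your proposal is correct and mirrors the paper's own argument: the paper simply assembles the theorem from the definition of $d_0$ in \eqref{eq:PLBC_d0} (for the first case), Lemma~\ref{lemma:BDC_ef_exact} (for the second), and Lemma~\ref{lemma:BDC_ef_UB_fixed_u} (for the third). Your explicit rank argument for $u < d_0$ is exactly the reasoning the paper gives in Section~\ref{subsec:PLBC} when it notes that \eqref{eq:BDC_LE_sol_exist} always holds in that range.
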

	
	We compare our upper bounds and simulation results assuming the the number of defects $u$ is given. The $[n = 31, k, l]$ partitioned BCH (PBCH) codes are considered and the weight distributions $B_{0, w}$ are calculated using the MacWilliams identity~\cite{Macwilliams1977theory}. The details of PBCH codes can be found in~\cite{Heegard1983plbc}. Fig.~\ref{fig:plot_fixed_u} shows that the upper bounds of \eqref{eq:BDC_ef_bound} are close to the simulation results for $P(E=0 \mid U=u)$. In addition, the calculated values of \eqref{eq:BDC_ef_exact_1} are well matched with the simulation results. Fig.~\ref{fig:plot_fixed_u} shows that the upper bounds approach $P(E=0 \mid U=u)$ and meet $P(E=0 \mid U=u)$ as the code rate decreases.
	
	From the bound on $P(E=0 \mid U=u)$ in Theorem~\ref{thm:enc_fail_bound_fixed_u}, we derive the following upper bound.
	
	\begin{corollary}\label{cor:BDC_UB_1} An upper bound on the probability of encoding failure $P\left(E = 0 \right)$ is given by 
	\begin{align}
	P(E=0)
	&\le \sum_{u=d_0}^{n}{ \beta^{u} \left( 1 - \beta \right)^{n-u}}\sum_{w=d_0}^{u}{B_{0,w} \binom{n-w}{u-w}} \label{eq:bound_enc_fail_2}.	
	\end{align}
	\end{corollary}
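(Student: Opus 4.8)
The plan is to lift the conditional bound of Theorem~\ref{thm:enc_fail_bound_fixed_u} to an unconditional one by averaging over the number of defects $U$. First I would record the distribution of $U$: since by \eqref{eq:BDC_prob} each cell is independently stuck-at with probability $P(S\ne\lambda)=\beta/2+\beta/2=\beta$, the defect count is binomial, so that $P(U=u)=\binom{n}{u}\beta^{u}(1-\beta)^{n-u}$. The law of total probability then yields
\[
P(E=0)=\sum_{u=0}^{n}\binom{n}{u}\beta^{u}(1-\beta)^{n-u}\,P(E=0\mid U=u),
\]
and the whole task reduces to substituting a clean upper bound for $P(E=0\mid U=u)$ and simplifying.

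Next I would collapse the three-case description in Theorem~\ref{thm:enc_fail_bound_fixed_u} into a single bound valid for every $u$. For $u<d_0$ the conditional probability is exactly $0$ by \eqref{eq:BDC_ef_exact_0}, so those terms vanish and the sum may start at $u=d_0$. For $d_0\le u\le d_0+t_0$ the exact value \eqref{eq:BDC_ef_exact_1} carries a factor $\tfrac12$ and is therefore at most $\big(\sum_{w=d_0}^{u}B_{0,w}\binom{n-w}{u-w}\big)/\binom{n}{u}$; for $u>d_0+t_0$ the bound \eqref{eq:BDC_ef_bound} is dominated by the same quantity because $\min\{x,1\}\le x$. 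Hence for all $u\ge d_0$,
\[
P(E=0\mid U=u)\le \frac{\sum_{w=d_0}^{u}B_{0,w}\binom{n-w}{u-w}}{\binom{n}{u}}.
\]

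The main (and essentially only) step is then the cancellation: plugging this into the total-probability expression, the factor $\binom{n}{u}$ from the binomial weight $P(U=u)$ cancels exactly against the denominator $\binom{n}{u}$ of the conditional bound, leaving
\[
P(E=0)\le\sum_{u=d_0}^{n}\beta^{u}(1-\beta)^{n-u}\sum_{w=d_0}^{u}B_{0,w}\binom{n-w}{u-w},
\]
which is precisely \eqref{eq:bound_enc_fail_2}.

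I do not anticipate a real obstacle here, since most of the content already sits in Theorem~\ref{thm:enc_fail_bound_fixed_u} and this corollary is just its unconditional form. The one point I would state carefully is that discarding the factor $\tfrac12$ on the range $d_0\le u\le d_0+t_0$ and dropping the $\min$ on the range $u>d_0+t_0$ both loosen the estimate in the same (safe) direction, so the single closed-form sum is a legitimate upper bound across the entire range of $u$; the $\binom{n}{u}$ cancellation is what makes the final expression tidy.
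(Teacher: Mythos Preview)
Your proposal is correct and follows essentially the same route as the paper: average the conditional bound of Theorem~\ref{thm:enc_fail_bound_fixed_u} against the binomial law $P(U=u)=\binom{n}{u}\beta^{u}(1-\beta)^{n-u}$, drop the $\tfrac12$ and the $\min$ to get a single expression, and let the $\binom{n}{u}$ factors cancel. The paper's proof is a one-line pointer to exactly these two ingredients, so your write-up is simply a more explicit version of the same argument.
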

	\begin{proof}	
		It can be shown by $P(U=u) = \binom{n}{n} \beta^u (1-\beta)^{n-u}$ and \eqref{eq:BDC_ef_bound}. 
	\end{proof}

	\begin{figure}[!t]
	\centering
	\subfloat[$u=10$]{\includegraphics[width=0.4\textwidth]{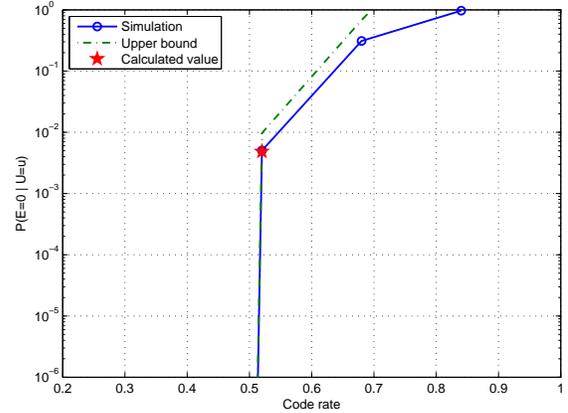}
		\label{fig:plot_fixed_u_10}}
	\hfill 
	\subfloat[$u=12$]{\includegraphics[width=0.4\textwidth]{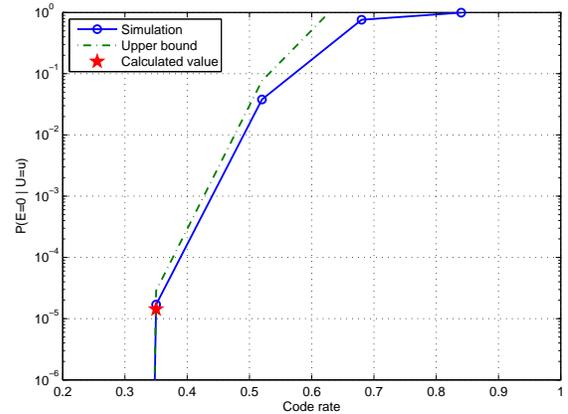} 
		\label{fig:plot_fixed_u_12}}
	\caption{Comparison of simulation results, upper bounds by \eqref{eq:BDC_ef_bound}, and calculated values by \eqref{eq:BDC_ef_exact_1} for $P(E = 0 \mid U=u)$. $[n = 31, k, l]$ PBCH codes are used. The code rate is $R = k/n$.}
	\label{fig:plot_fixed_u}
	\vspace{-5mm}
	\end{figure}
	
	Since it is intractable to compute $B_{0,w}$ for large $n$, we consider the following binomial approximation.
	\begin{equation}\label{eq:B0w_approximate}
		B_{0,w} \cong 2^{-l} \binom{n}{w}
	\end{equation}
	For many codes including random linear codes (each element of generator matrix is chosen uniformly at random from $\left\{0, 1 \right\}$) and BCH codes, it is known that the weight distributions are well approximated by the binomial distribution~\cite{Macwilliams1977theory}. 
	
	\begin{corollary}\label{cor:BDC_UB_2}If the weight distribution $B_{0,w}$ follows the binomial approximation, the upper bound of \eqref{eq:bound_enc_fail_2} is given by
	\begin{align}
		P(E=0) &\le 2^{-l}\left(1 + \beta\right)^n \label{eq:BDC_UB_2}, \\
		\log_2{P(E=0)} &\le n \left\{R - \left(1 - \log_2(1 + \beta) \right) \right\}. \label{eq:BDC_UB_2_log}
	\end{align}
	\end{corollary}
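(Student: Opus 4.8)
The plan is to start from the finite-length bound of Corollary~\ref{cor:BDC_UB_1} and propagate the binomial approximation \eqref{eq:B0w_approximate} through it, reducing the resulting double sum to a single binomial sum that collapses in closed form. First I would substitute $B_{0,w}\cong 2^{-l}\binom{n}{w}$ into \eqref{eq:bound_enc_fail_2} and pull the constant factor $2^{-l}$ outside both summations, obtaining $P(E=0)\le 2^{-l}\sum_{u=d_0}^{n}\beta^{u}(1-\beta)^{n-u}\sum_{w=d_0}^{u}\binom{n}{w}\binom{n-w}{u-w}$.

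The key algebraic step is the subset-of-a-subset identity $\binom{n}{w}\binom{n-w}{u-w}=\binom{n}{u}\binom{u}{w}$, which factors $\binom{n}{u}$ out of the inner sum and leaves only $\sum_{w=d_0}^{u}\binom{u}{w}$, an index now decoupled from $n$. Bounding this partial sum by the full binomial sum $\sum_{w=d_0}^{u}\binom{u}{w}\le\sum_{w=0}^{u}\binom{u}{w}=2^{u}$ turns the estimate into $P(E=0)\le 2^{-l}\sum_{u=d_0}^{n}\binom{n}{u}(2\beta)^{u}(1-\beta)^{n-u}$. Extending the outer summation down to $u=0$ (every term is nonnegative, so this only enlarges the right-hand side) and applying the binomial theorem with $2\beta+(1-\beta)=1+\beta$ yields $\sum_{u=0}^{n}\binom{n}{u}(2\beta)^{u}(1-\beta)^{n-u}=(1+\beta)^{n}$, which gives \eqref{eq:BDC_UB_2}, namely $P(E=0)\le 2^{-l}(1+\beta)^{n}$. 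Taking $\log_2$ produces $\log_2 P(E=0)\le -l+n\log_2(1+\beta)$, and substituting the masking redundancy $l=n-k$ so that $R=k/n=1-l/n$ rewrites this as $n\{R-(1-\log_2(1+\beta))\}$, which is \eqref{eq:BDC_UB_2_log}.

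I do not anticipate a genuine obstacle: once the identity $\binom{n}{w}\binom{n-w}{u-w}=\binom{n}{u}\binom{u}{w}$ is in hand, the computation is mechanical. The only points requiring care are the two relaxations — replacing the partial sum $\sum_{w=d_0}^{u}$ by $2^{u}$ and enlarging the range $\sum_{u=d_0}^{n}$ to $\sum_{u=0}^{n}$ — both of which must be checked to increase the right-hand side so that the inequality direction is preserved. I would also note explicitly that \eqref{eq:BDC_UB_2} is only as tight as the underlying binomial approximation \eqref{eq:B0w_approximate} of $B_{0,w}$, so it inherits the accuracy of that approximation rather than being a strict combinatorial bound.
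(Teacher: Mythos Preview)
Your proposal is correct and follows essentially the same route as the paper: substitute the binomial approximation, apply the identity $\binom{n}{w}\binom{n-w}{u-w}=\binom{n}{u}\binom{u}{w}$, relax the inner and outer sums to full range, and collapse via the binomial theorem with $2\beta+(1-\beta)=1+\beta$. Your explicit remark that \eqref{eq:BDC_UB_2_log} uses $l=n-k$ (i.e., $r=0$, the pure BDC setting) is a clarification the paper leaves implicit.
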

	\begin{proof}
	From \eqref{eq:bound_enc_fail_2} and \eqref{eq:B0w_approximate}, the upper bound on $P\left(E = 0 \right)$ can be derived as follows.
	\begin{align}
	P\left(E=0\right)
	& \le \sum_{u = d_0}^{n}{\beta^u \left(1 - \beta \right)^{n-u}  \sum_{w=d_0}^{u}{B_{0, w} \binom{n-w}{u-w}} } \label{eq:bound_enc_fail_variant_0} \\
	& = 2^{-l} \sum_{u = d_0}^{n}{\beta^u \left(1 - \beta \right)^{n-u} \sum_{w=d_0}^{u}{\binom{u}{w} \binom{n}{u}}} \label{eq:bound_enc_fail_variant_2} \\
	& \le 2^{-l} \sum_{u = 0}^{n}{\binom{n}{u} \left(2\beta\right)^u \left(1 - \beta \right)^{n-u}} \label{eq:bound_enc_fail_variant_4} \\
	& = 2^{-l} \left(1 + \beta \right)^n \label{eq:RA_M}
	\end{align}
	where \eqref{eq:bound_enc_fail_variant_2} follows from $\binom{n}{w} \binom{n-w}{u-w} = \binom{u}{w} \binom{n}{u}$. Also, \eqref{eq:bound_enc_fail_variant_4} follows from the binomial theorem $\sum_{w = 0}^{u}{\binom{u}{w}} = 2^u$.
	\eqref{eq:BDC_UB_2_log} can be obtained by taking the logarithm.
	\end{proof}	
	
	Fig.~\ref{fig:plot_fixed_beta} shows that the upper bound is very close to the simulation results when the probability of encoding failure is low. In regards to the simulation results, we used PBCH codes for the BDC with $\beta = 0.1$.
	
	\begin{remark}
	\eqref{eq:BDC_UB_2} shows that the probability of encoding failure decreases as $l$ increases, whereas the probability of encoding failure increases as $\beta$ increases. For infinite-length codes, this upper bound is not tight since $1 - \log_2(1 + \beta)$ of \eqref{eq:BDC_UB_2_log} is less than $C_{\text{BDC}} = 1 - \beta$. However, this upper bound is tight for finite-length codes as shown in Fig.~\ref{fig:plot_fixed_beta}. Moreover, the upper bound in \eqref{eq:BDC_UB_2_log} is the linear function of $R$ where $n$ is the slope and $1 - \log_2(1 + \beta)$ is the $R$-intercept, which explicitly shows that longer codes improve the probability of encoding failure.   
	\end{remark}
	
	\begin{figure}[!t]
		\centering
		\includegraphics[width=0.4\textwidth]{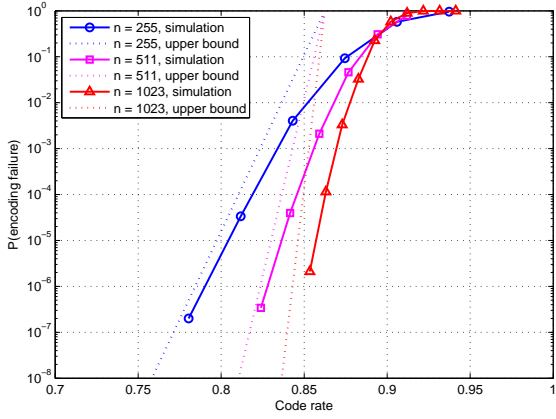}
		\caption{Comparison of simulation results and upper bounds in~\eqref{eq:BDC_UB_2} for the probability of encoding failure $P(E=0)$. We used PBCH codes for the BDC with probability of defect $\beta=0.1$.}
		\label{fig:plot_fixed_beta}
		\vspace{-2mm}
	\end{figure}
		

	\subsection{Redundancy Allocation: BDEC} \label{sec:BDEC_ra}

	For the BDEC, the encoder should try to mask stuck-at defects and the decoder should correct erasures. The encoding process of the BDEC is the same as the encoding in \ref{subsec:PLBC}. Only the decoding will be modified as follows. 
	
	\emph{Decoding (BDEC):} The MLD is done by solving the following linear equation.
	\begin{equation}\label{eq:BDEC_decoder}
		\widetilde{G}^{\mathcal{V}} \begin{bmatrix} \widehat{\mathbf{m}} \\ \widehat{\mathbf{d}} \end{bmatrix} = {\mathbf{y}}^{\mathcal{V}}
	\end{equation}
	where ${\mathcal{V}}=\left\{j_1,\cdots, j_v\right\}$ indicates the locations of $v$ unerased bits. We use the notation of ${\mathbf{y}}^{\mathcal{V}}=\left(y_{j_1}, \cdots, y_{j_v}\right)^T$ and $\widetilde{G}^{\mathcal{V}}=\left[ \widetilde{\mathbf{g}}_{j_1}^T, \cdots, \widetilde{\mathbf{g}}_{j_v}^T \right]^T$ where $\widetilde{\mathbf{g}}_j$ is the $j$-th row of $\widetilde{G}$. By solving \eqref{eq:BDEC_decoder}, we can obtain estimates of the message $\mathbf{m}$ and redundancy $\mathbf{d}$, i.e., $\widehat{\mathbf{m}}$ and $\widehat{\mathbf{d}}$. We consider the MLD in order to derive the upper bound on $P\left( \widehat{\mathbf{m}} \ne \mathbf{m} \right)$. The MLD can be accomplished by solving the linear equation in \eqref{eq:BDEC_decoder}, whose complexity is ${\mathcal{O}}\left(n^3\right)$. 
	
	
	
	\begin{theorem} \label{thm:BDEC_region}
		If a pair of code rates $\left(R_0, R_1\right)$ satisfy the following conditions, then the probability of recovery failure $P\left( \widehat{\mathbf{m}} \ne \mathbf{m} \right)$ approaches zero as $n \rightarrow \infty$.
		\begin{equation}
			R_0  > \beta, \quad R_1 + R_0 < 1 - \alpha \label{eq:BDEC_region}
		\end{equation}
	\end{theorem}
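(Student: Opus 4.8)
The plan is to bound $P(\widehat{\mathbf{m}} \ne \mathbf{m})$ by a union of an encoding (masking) failure event and a decoding (erasure-recovery) failure event, and then to show that each vanishes for a random-coding construction in which the entries of $\widetilde{G} = [G_1 \; G_0]$ are drawn independently and uniformly from $\{0,1\}$. The crucial structural fact is that whenever masking succeeds we have $\mathbf{c}\circ\mathbf{s} = \mathbf{c}$, so the channel output restricted to the unerased set $\mathcal{V}$ is exactly $\mathbf{c}^{\mathcal{V}} = \widetilde{G}^{\mathcal{V}}[\mathbf{m}^T\;\mathbf{d}^T]^T$; hence the MLD of \eqref{eq:BDEC_decoder} returns the correct $(\mathbf{m},\mathbf{d})$ whenever $\widetilde{G}^{\mathcal{V}}$ has full column rank $k+l$. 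This gives the bound
\begin{equation}
P(\widehat{\mathbf{m}} \ne \mathbf{m}) \le P(E=0) + P\big(\rank(\widetilde{G}^{\mathcal{V}}) < k+l\big),
\end{equation}
and by linearity of expectation over the ensemble it suffices to drive $\mathbb{E}_{\widetilde{G}}[P(E=0)]$ and $\mathbb{E}_{\widetilde{G}}[P(\rank(\widetilde{G}^{\mathcal{V}}) < k+l)]$ to zero. The key simplification is that the defect pattern $\mathcal{U}$ and the erasure pattern (hence $\mathcal{V}$) are produced by the channel independently of $\widetilde{G}$, so after conditioning on these patterns each relevant submatrix of $\widetilde{G}$ is still i.i.d.\ uniform and each expectation reduces to a random-matrix rank computation.

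For the first term I would use the condition $R_0 > \beta$. Given $U=u$ defects, masking succeeds whenever $G_0^{\mathcal{U}}$ (a $u\times l$ submatrix) has full row rank $u$, since then $G_0^{\mathcal{U}}\mathbf{d} = \mathbf{b}^{\mathcal{U}}$ is solvable for every right-hand side, irrespective of the value of $\mathbf{b}^{\mathcal{U}}$. Averaging over the random $G_0$ yields the ensemble bound $P(E=0\mid U=u) \le 2^{\,u-l}$ for $u\le l$ and $\le 1$ otherwise. Since $U\sim\mathrm{Binomial}(n,\beta)$ with mean $\beta n < R_0 n = l$, a Chernoff/Hoeffding bound makes $P(U\ge l)$ decay exponentially, while for typical values $u\approx\beta n$ one has $2^{\,u-l}\approx 2^{(\beta-R_0)n}\to 0$; splitting the sum at, say, $u=\lfloor\tfrac{\beta+R_0}{2}n\rfloor$ then shows $\mathbb{E}_{\widetilde{G}}[P(E=0)]\to 0$.

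For the second term I would use $R_1+R_0 < 1-\alpha$. The number of unerased positions $V$ is $\mathrm{Binomial}(n,1-\alpha)$ with mean $(1-\alpha)n > (R_1+R_0)n = k+l$. Conditioned on $\mathcal{V}$ with $|\mathcal{V}|=v$, the matrix $\widetilde{G}^{\mathcal{V}}$ is i.i.d.\ uniform of size $v\times(k+l)$, so $P(\rank(\widetilde{G}^{\mathcal{V}}) < k+l \mid V=v) \le 2^{(k+l)-v}$ for $v\ge k+l$ and $\le 1$ otherwise. Again $P(V<k+l)$ decays exponentially, and for typical $v\approx(1-\alpha)n$ the bound $2^{(k+l)-v}\approx 2^{((R_1+R_0)-(1-\alpha))n}\to 0$; the same split argument gives $\mathbb{E}_{\widetilde{G}}[P(\rank(\widetilde{G}^{\mathcal{V}}) < k+l)]\to 0$.

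Combining the two estimates shows that the expected recovery-failure probability over the random ensemble tends to zero, so some deterministic PLBC achieves $P(\widehat{\mathbf{m}} \ne \mathbf{m})\to 0$ whenever \eqref{eq:BDEC_region} holds. The main obstacle I anticipate is not either rank computation in isolation but justifying that conditioning on encoding success does not corrupt the decoding analysis; the cleanest fix is the union bound above, which replaces the conditional decoding-failure event by the unconditional rank deficiency of $\widetilde{G}^{\mathcal{V}}$, so that the two contributions depend on the code only through submatrices on the channel-generated index sets $\mathcal{U}$ and $\mathcal{V}$. A secondary point is that the sharp masking threshold $R_0>\beta$ comes only from the exact full-rank argument: the binomial bound of Corollary~\ref{cor:BDC_UB_2} would give the looser requirement $R_0 > \log_2(1+\beta)$, so the asymptotic region must rely on the exact rank estimate rather than on \eqref{eq:BDC_UB_2}.
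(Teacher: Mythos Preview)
Your proposal is correct and follows essentially the same route as the paper: decompose $P(\widehat{\mathbf{m}}\ne\mathbf{m})$ into an encoding-failure term and a decoding-failure term, and kill each with a random-coding rank argument over i.i.d.\ uniform $\widetilde{G}$, using $R_0>\beta$ for the $G_0^{\mathcal{U}}$ piece and $R_0+R_1<1-\alpha$ for the $\widetilde{G}^{\mathcal{V}}$ piece. The only cosmetic differences are that the paper writes $P(E=0)+P(E=1,D=0)$ and, for the second term, simply invokes ``equivalent to the BEC with erasure probability $\alpha$'' rather than spelling out the $2^{(k+l)-v}$ rank bound you give; your explicit union-bound replacement of $P(E=1,D=0)$ by the unconditional rank-deficiency event is a clean way to sidestep the conditioning issue the paper leaves implicit.
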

	where $\alpha$ is the probability of erasure and $\beta$ is the probability of defect for the BDEC. 
	
	\begin{proof}
		We show that $P\left( \widehat{\mathbf{m}} \ne \mathbf{m} \right)$ approaches zero with $n$ if $\left(R_0, R_1\right)$ satisfies \eqref{eq:BDEC_region}. We can claim that $P\left( \widehat{\mathbf{m}} \ne \mathbf{m} \right) = P\left(E=0\right) + P\left(E=1, D=0\right)$ where the random variable $D$ is defined as follows. 
		\begin{equation} \label{eq:BDEC_D}
		D =
		\begin{cases}
		1, & ~\text{decoding success} \\
		0, & ~\text{decoding failure}
		\end{cases}
		\end{equation}
		
		From~\eqref{eq:BDC_CA}, we can claim that $P\left( E=0 \right) \le n (\beta + \epsilon) 2^{-n \left( \frac{l}{n} - \beta - \epsilon\right)} + \epsilon'$. Hence, $P\left( E=0 \right)$ converges to zero if $R_0 = \frac{l}{n} > \beta$. If the additive encoding succeeds (i.e., $E=1$), then the corresponding channel is equivalent to the BEC with the erasure probability $\alpha$. It is because all the stuck-at defects are masked. Note that the code rate of $\widetilde{G}$ of \eqref{eq:BDEC_decoder} is $\frac{k+l}{n} = R_0 + R_1$. Thus, $R_0 + R_1 < 1 - \alpha$.
	\end{proof}
	
	
	\begin{table}[t]
		\caption{All Possible Redundancy Allocation Candidates of $\left[ n = 1023, k=923, l \right]$ PBCH Codes}
		\label{tab:PLBC}
		\centering
		\renewcommand{\arraystretch}{1.1}
		\small{
		\begin{tabular}{c|c|c|c|c|c}
			\hline
			Code & {$l$} & {$r$} & {$d_0$} & {$d_1$} & Remarks   \\ \hline \hline
			0 & 0 & 100 & 0 & 21 & Only correcting transient errors\\ \hline
			1 & 10 & 90 & 3 & 19 &\\ \hline
			2 & 20 & 80 & 5 & 17 &\\ \hline
			3 & 30 & 70 & 7 & 15 &\\ \hline
			4 & 40 & 60 & 9 & 13 &\\ \hline
			5 & 50 & 50 & 11 & 11 & \\ \hline
			6 & 60 & 40 & 13 & 9 &\\ \hline
			7 & 70 & 30 & 15 & 7 &\\ \hline
			8 & 80 & 20 & 17 & 5 &\\ \hline
			9 & 90 & 10 & 19 & 3 &\\ \hline
			10& 100 & 0 & 21 & 0 & Only masking stuck-at defects\\ \hline
		\end{tabular}}
		\vspace{-5mm}
	\end{table}	
	
	Fig.~\ref{fig:BDEC_region} represents the capacity region by Theorem~\ref{thm:BDEC_region}. The supremum of $R_1$ in this capacity region is $1 - \alpha - \beta$ which is equal to $C_{\text{BDEC}}^{\text{enc}}$ of \eqref{eq:capacity_BDEC}. From \eqref{eq:BDEC_region}, we can obtain $l  > n\beta$ and $r > n \alpha$, which achieve the capacity for infinite $n$.

	As explained earlier, these asymptotic results cannot be directly used to choose the optimal redundancy allocation $\left(l^*, r^*\right)$ of \eqref{eq:opt_prob} for finite-length codes. We emphasize that the optimal redundancy allocation $\left(l^*, r^*\right)$ is equivalent to finding the optimal (i.e., minimizing the probability of recovery error) point in the capacity region in Fig.~\ref{fig:BDEC_region}.
	
	In order to solve the optimization problem of \eqref{eq:opt_prob}, we need a closed-form expression for $P\left( \widehat{\mathbf{m}} \ne \mathbf{m} \right)$. Unfortunately, it is difficult to obtain the exact expression of $P\left( \widehat{\mathbf{m}} \ne \mathbf{m} \right)$. Thus, we will consider an estimate $(\widehat{l}, \widehat{r})$ which minimizes an upper bound on $P\left( \widehat{\mathbf{m}} \ne \mathbf{m} \right)$. 
	
	
	\begin{theorem} \label{thm:BDEC_UB} 
An upper bound on $P\left( \widehat{\mathbf{m}} \ne \mathbf{m} \right)$ is given by
		\begin{align}
		P\left( \widehat{\mathbf{m}} \ne \mathbf{m} \right) &\le \sum_{u=d_0}^{n}{ \beta^{u} \left( 1 - \beta \right)^{n-u}}\sum_{w=d_0}^{u}{B_{0,w} \binom{n-w}{u-w}}  \nonumber \\
		&+  \sum_{e = d_1}^{n}{\alpha^e \left(1 - \alpha \right)^{n-e}
			\sum_{w=d_1}^{e}{A_{w} \binom{n-w}{e-w}}}.
		\end{align} 
		where $A_w$ is the weight distribution of $\mathcal{C}$. If $A_w$ and $B_{0, w}$ follow the binomial distribution, 
		\begin{equation} \label{eq:BDEC_UB}
		P\left( \widehat{\mathbf{m}} \ne \mathbf{m} \right) \le 2^{-l} \left(1 + \beta \right)^n + 2^{-r} \left( 1 + \alpha \right)^n. 
		\end{equation}
	\end{theorem}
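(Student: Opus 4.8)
The plan is to reuse the failure-event decomposition already used in the proof of Theorem~\ref{thm:BDEC_region}, namely
\[ P(\widehat{\mathbf{m}} \ne \mathbf{m}) = P(E=0) + P(E=1, D=0), \]
which holds because declaring an encoding failure ($E=0$) is itself a recovery failure, so the two events on the right are disjoint and exhaust all recovery failures. The first term is bounded immediately by Corollary~\ref{cor:BDC_UB_1}, which reproduces the first double sum verbatim. For the second term I would pass to the conditional probability using $P(E=1, D=0) \le P(D=0 \mid E=1)$ (since $P(E=1) \le 1$), thereby decoupling the defect-masking mechanism from the erasure-correction mechanism.

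The crux is bounding $P(D=0 \mid E=1)$. Given $E=1$, every stuck-at defect is masked, so $\mathbf{c} \circ \mathbf{s} = \mathbf{c}$ and the effective channel reduces to a pure BEC acting on $\mathbf{c}$ with each coordinate erased independently with probability $\alpha$, independent of the defect realization. The ML decoder solves \eqref{eq:BDEC_decoder} on the unerased coordinates; since $\widetilde{G} = [G_1 \; G_0]$ has full column rank (because $\mathcal{C}_1 \cap \mathcal{C}_0 = \{\mathbf{0}\}$), the message is recovered uniquely unless the kernel of $\widetilde{G}^{\mathcal{V}}$ contains a vector with nonzero message part, equivalently unless the erased set contains the support of some codeword $\mathbf{c} \in \mathcal{C}$ with nonzero message component. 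By the definition of $d_1$ in \eqref{eq:PLBC_d0}, every such codeword has weight at least $d_1$. A union bound over these codewords, together with the fact that a fixed weight-$w$ codeword is entirely erased with probability $\alpha^w$, then yields
\[ P(D=0 \mid E=1) \le \sum_{w=d_1}^{n} A_w \alpha^w, \]
where enlarging the count from nonzero-message codewords to all weight-$w$ codewords of $\mathcal{C}$ only loosens the bound.

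It remains to identify $\sum_{w=d_1}^{n} A_w \alpha^w$ with the stated double sum. I would swap the order of summation, substitute $e = w + j$, and apply the binomial theorem $\sum_{j=0}^{n-w}\binom{n-w}{j}\alpha^j(1-\alpha)^{n-w-j} = 1$, which collapses the inner sum and recovers $\sum_{w=d_1}^{n} A_w \alpha^w$; this establishes the first inequality of the theorem. For the closed form \eqref{eq:BDEC_UB}, the first term follows from Corollary~\ref{cor:BDC_UB_2}. For the second term I would insert the binomial approximation $A_w \cong 2^{-r}\binom{n}{w}$, valid for $\mathcal{C}$ (an $[n, k+l]$ code with $n - (k+l) = r$ parity symbols, in the same spirit as \eqref{eq:B0w_approximate}), extend the summation from $w = d_1$ down to $w = 0$, and use $\sum_{w=0}^n \binom{n}{w}\alpha^w = (1+\alpha)^n$ to obtain $2^{-r}(1+\alpha)^n$.

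The main obstacle is the second paragraph: reducing the conditional channel to a clean BEC and, in particular, characterizing message-level (as opposed to full-codeword) decoding failure through the kernel of $\widetilde{G}^{\mathcal{V}}$, so that the relevant minimum distance is $d_1$ rather than the minimum distance of $\mathcal{C}$. The remaining manipulations, the summation reindexing and the two binomial-theorem evaluations, are routine.
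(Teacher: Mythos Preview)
Your proposal is correct and follows essentially the same approach as the paper: the same decomposition $P(\widehat{\mathbf{m}} \ne \mathbf{m}) = P(E=0) + P(E=1, D=0)$, the same reduction of the $E=1$ case to a pure BEC, and the same union bound over codewords of $\mathcal{C}$ with weight at least $d_1$. The only cosmetic difference is that the paper conditions first on the number of erasures $e$ and bounds $P(D=0 \mid |\mathcal{E}|=e)$ by the counting argument dual to Lemma~\ref{lemma:BDC_ef_UB_fixed_u}, whereas you write the union bound directly as $\sum_{w \ge d_1} A_w \alpha^w$ and then verify the algebraic identity with the double sum; the two are identical after the reindexing you describe.
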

	\begin{proof}
		The proof is given in Appendix~\ref{pf:BDEC_UB}-F. 
	\end{proof}

	In this upper bound, we observe a dual relation between erasures and defects~\cite{Kim2016duality}. Note that $A_w$ and $B_{0,w}$ are the weight distributions of $\mathcal{C}$ and ${\mathcal{C}}_0^{\perp}$, respectively. 
		
	From these upper bounds on $P\left( \widehat{\mathbf{m}} \ne \mathbf{m} \right)$, we can modify \eqref{eq:opt_prob} for the BDEC as follows. 	

	\begin{equation}\label{eq:BDEC_opt_prob}
	\begin{aligned}
	( \widehat{l}, \widehat{r} )  = \; & \underset{(l, r)}{\text{argmin}} & & \text{Upper bound on }P\left( \widehat{\mathbf{m}} \ne \mathbf{m} \right)  \\
		& \text{subject to} & & 0 \le l \le n - k, \quad 0 \le r \le n - k\\
		&&&  l+r = n-k 
	\end{aligned}
	\end{equation}	

	
	If the code parameters $(n, k, l, d_0, d_1)$ and the channel parameters ($\alpha$, $\beta$) are given, the solution $( \widehat{l}, \widehat{r} )$ of \eqref{eq:BDEC_opt_prob} can be obtained. To illustrate this, we consider $\left[ n = 1023, k=923, l \right]$ PBCH codes. All possible redundancy allocation candidates of PBCH codes are listed in Table~\ref{tab:PLBC}. Since $l$ and $r$ are multiples of 10 (i.e., the degree of the Galois field of $[n = 1023, k]$ BCH codes), there are 11 redundancy allocation candidates. Hence, we can readily obtain the redundancy $( \widehat{l}, \widehat{r} )$ that minimizes the objective function of \eqref{eq:BDEC_opt_prob}.
		
	In addition, the objective function is \emph{convex} if we treat $l$ and $r$ as real values, even though we know that they are non-negative integers less than or equal to $n-k$. We can derive the solution $(\widetilde{l}, \widetilde{r})$ of \eqref{eq:BDEC_opt_prob} satisfying \emph{Karush-Kuhn-Tucker} (KKT) conditions.
	
	\begin{corollary} \label{cor:BDEC_KKT} Treating $l$ and $r$ as real values, the solution of \eqref{eq:BDEC_opt_prob} satisfying KKT conditions is given by
		\begin{numcases}{(\widetilde{l}, \widetilde{r})=}
		(0, n-k), & \hspace{-5mm} for $\frac{1+\alpha}{1+\beta} > 2^{1 - R} $ \label{eq:BDEC_opt_sol_con1}
		\\
		\left( \check{l}, \check{r}  \right),  & \hspace{-5mm} for $2^{-(1 - R)} \le \frac{1+\alpha}{1+\beta} \le 2^{1 - R} $ \label{eq:BDEC_opt_sol_con3}
		\\
		(n-k, 0), & \hspace{-5mm} for $\frac{1+\alpha}{1+\beta} < 2^{-(1 - R)}$ \label{eq:BDEC_opt_sol_con2}
		\end{numcases}
		where $\left( \check{l}, \check{r} \right)$ is given by
		\begin{align}
		\check{l} &= \frac{1}{2} \left\{ n \left( 1 - \log_2{\frac{1+\alpha}{1 +\beta}} \right) - k \right\}, \label{eq:BDEC_opt_sol_1} \\
		\check{r} &= \frac{1}{2} \left\{ n \left( 1 + \log_2{\frac{1+\alpha}{1 +\beta}} \right) - k \right\}. \label{eq:BDEC_opt_sol_2}
		\end{align}
	\end{corollary}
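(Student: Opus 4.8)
The plan is to minimize the closed-form upper bound in \eqref{eq:BDEC_UB}, namely $f(l,r) = 2^{-l}(1+\beta)^n + 2^{-r}(1+\alpha)^n$, subject to the equality constraint $l+r = n-k$ and the box constraints $0 \le l \le n-k$, $0 \le r \le n-k$. First I would eliminate the equality constraint by substituting $r = n-k-l$, reducing the problem to minimizing the single-variable function $g(l) = 2^{-l}(1+\beta)^n + 2^{l}\,2^{-(n-k)}(1+\alpha)^n$ over $l \in [0, n-k]$. Since $g$ is the sum of a strictly decreasing exponential and a strictly increasing exponential in $l$, each of which is convex, $g$ is strictly convex; hence any stationary point is the unique global minimizer and the KKT conditions are both necessary and sufficient. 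This is exactly the convexity remark already asserted in the text preceding the corollary.

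Next I would write the stationarity condition. Treating $l$ and $r$ as real variables and introducing a Lagrange multiplier for the equality constraint, interior stationarity forces the two partial derivatives $\partial f/\partial l = -(\ln 2)\,2^{-l}(1+\beta)^n$ and $\partial f/\partial r = -(\ln 2)\,2^{-r}(1+\alpha)^n$ to be equal, i.e. $2^{-l}(1+\beta)^n = 2^{-r}(1+\alpha)^n$. Taking $\log_2$ of both sides gives $r - l = n\log_2\frac{1+\alpha}{1+\beta}$, and combining this with $l+r = n-k = n(1-R)$ yields the interior solution $(\check{l}, \check{r})$ of \eqref{eq:BDEC_opt_sol_1} and \eqref{eq:BDEC_opt_sol_2}. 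This is essentially the only nontrivial computation and it is routine algebra.

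Finally I would account for the box constraints through the remaining KKT conditions, namely complementary slackness on $l \ge 0$ and $r \ge 0$. The interior candidate $(\check{l}, \check{r})$ is feasible precisely when $\check{l} \ge 0$ and $\check{r} \ge 0$; from \eqref{eq:BDEC_opt_sol_1} the condition $\check{l} \ge 0$ is equivalent to $\log_2\frac{1+\alpha}{1+\beta} \le 1-R$, i.e. $\frac{1+\alpha}{1+\beta} \le 2^{1-R}$, and symmetrically \eqref{eq:BDEC_opt_sol_2} shows $\check{r} \ge 0$ is equivalent to $\frac{1+\alpha}{1+\beta} \ge 2^{-(1-R)}$. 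When $\frac{1+\alpha}{1+\beta} > 2^{1-R}$ the unconstrained minimizer has $\check{l} < 0$, so by convexity of $g$ the constrained minimum is attained at the boundary $l=0$, giving $(\widetilde{l}, \widetilde{r}) = (0, n-k)$; symmetrically, when $\frac{1+\alpha}{1+\beta} < 2^{-(1-R)}$ the minimum sits at $r=0$, giving $(n-k, 0)$. These three regimes reproduce \eqref{eq:BDEC_opt_sol_con1}, \eqref{eq:BDEC_opt_sol_con3}, and \eqref{eq:BDEC_opt_sol_con2}.

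The main obstacle---modest as it is---will be the careful bookkeeping of the KKT multipliers for the two inequality constraints and checking that the stated thresholds on $\frac{1+\alpha}{1+\beta}$ coincide exactly with the sign changes of $\check{l}$ and $\check{r}$. The strict convexity of $g$ is what guarantees that these first-order and boundary conditions identify the true global optimum rather than merely a critical point, so I would state it explicitly before invoking the KKT argument.
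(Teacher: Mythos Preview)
Your proposal is correct and follows essentially the same approach as the paper: establish convexity, derive the interior stationarity condition $2^{-l}(1+\beta)^n = 2^{-r}(1+\alpha)^n$, solve it together with $l+r=n-k$ to obtain $(\check{l},\check{r})$, and then use the box constraints to produce the three regimes. Your reduction to a single variable via $r=n-k-l$ is a minor streamlining of the paper's full Lagrangian with four inequality multipliers, but the substance of the argument is identical.
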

	\begin{proof}
		The proof is given in Appendix~\ref{pf:BDEC_KKT}-G.
	\end{proof}

	\begin{table}[t]
	\renewcommand{\arraystretch}{1.3}
	\caption{BDEC with the Same $C_{\text{BDEC}}=0.96$}
	\label{tab:BDEC_channel}
	\centering
	{\hfill{}
		\small{
			\begin{tabular}{c|c|c|c}
				\hline
				Channel & {$\alpha$} & {$\beta$} & {Remarks} \\ \hline \hline
				1       & 0.040 & 0 & BEC \\ \hline
				2       & 0.035 & 0.005 &  \\ \hline
				3       & 0.025 & 0.015 &  \\ \hline
				4       & 0.020 & 0.020 &  \\ \hline
				5       & 0.015 & 0.025 &  \\ \hline
				6       & 0.005 & 0.035 &  \\ \hline
				7       & 0 & 0.040 & BDC \\  \hline
		\end{tabular}}
	}
	\hfill{}
	\vspace{-3mm}
\end{table}
		
	If $\alpha$ is much larger than $\beta$ such that $\frac{1+\alpha}{1+\beta} > 2^{1 - R} $ for all possible $(l, r)$, then \eqref{eq:BDEC_opt_sol_con1} shows that we should allot all redundancy for correcting erasures to minimize the upper bound of \eqref{eq:BDEC_opt_prob}. If $\beta$ is much larger than $\alpha$ such that $\frac{1+\alpha}{1+\beta} < 2^{-(1 - R)}$ for all possible $(l, r)$, then \eqref{eq:BDEC_opt_sol_con2} shows that we should allot all redundancy for masking stuck-at defects to minimize the upper bound. For other $\alpha$ and $\beta$, we should allot the redundancy $(l, r)$ such that $2^{-l} \left(1 + \beta \right)^n = 2^{-r} \left( 1 + \alpha \right)^n$ to minimize the upper bound, which are satisfied by \eqref{eq:BDEC_opt_sol_1} and \eqref{eq:BDEC_opt_sol_2}.   
	
	\begin{remark}
		If only the normal cells can be erased, the corresponding channel's erasure probability is $\widetilde{\alpha} = (1-\beta)\alpha$. Hence the capacity will be $1 - \widetilde{\alpha} - \beta = (1 - \beta)(1-\alpha)$ which is the same as \eqref{eq:capacity_BDEC_tilde}. The equivalent results of Theorem~\ref{thm:BDEC_region}, Theorem~\ref{thm:BDEC_UB}, and Corollary~\ref{cor:BDEC_KKT} can be obtained by replacing $\alpha$ by $\widetilde{\alpha}$.
	\end{remark}

	In order to compare the optimal redundancy and the estimated redundancy allocation based on the derived upper bound, we consider several channels in Table~\ref{tab:BDEC_channel} whose capacities are all equal, i.e., $C_{\text{BDEC}}^{\text{enc}}=0.96$. For these channels, we investigate the performance of $\left[ n = 1023, k=923, l \right]$ PBCH codes in Table~\ref{tab:PLBC}.
	
	Fig.~\ref{fig:plot_BDEC_RA} shows the simulation results for the channels in Table~\ref{tab:BDEC_channel}. The simulation results of channel 1 (BEC) and channel 7 (BDC) are omitted because their optimal redundancy allocations are obvious. The optimal redundancy allocation for channel 1 (BEC) is $\left(l^*, r^* \right) = (0, 100)$. The more stuck-at defects a channel has, the larger $l$ is expected to be for the optimal redundancy allocation. Eventually, the optimal redundancy allocation for channel 7 (BDC) will be $\left(l^*, r^* \right) = (100, 0)$. The optimal $\left(l^*, r^*\right)$ can be obtained from Monte-Carlo simulation results in Fig.~\ref{fig:plot_BDEC_RA}, which are presented in the second column of Table~\ref{tab:BDSC_redundancy}.
	
	
	We can readily obtain the redundancy $(\widehat{l}, \widehat{r})$ that minimizes the upper bounds in Fig.~\ref{fig:plot_BDEC_RA}. The estimates of redundancy allocation $(\widehat{l}, \widehat{r})$ are shown in the third column of Table~\ref{tab:BDEC_redundancy} which shows that $(\widehat{l}, \widehat{r})$ by the upper bounds matches very well the redundancy $(l^*, r^*)$ determined by Monte-Carlo simulations.
	
	Next, $(\widetilde{l}, \widetilde{r})$ can be calculated from \eqref{eq:BDEC_opt_sol_con1}--\eqref{eq:BDEC_opt_sol_2} by treating $l$ and $r$ as real values. Resulting $(\widetilde{l}, \widetilde{r})$ values are shown in the last column of Table~\ref{tab:BDEC_redundancy}. $(\widehat{l}, \widehat{r} )$ is the nearest one from $(\widetilde{l}, \widetilde{r})$ considering the possible redundancy allocation candidates in Table~\ref{tab:PLBC}.	
	
	\begin{figure}[!t]
		\centering
		\includegraphics[width=0.4\textwidth]{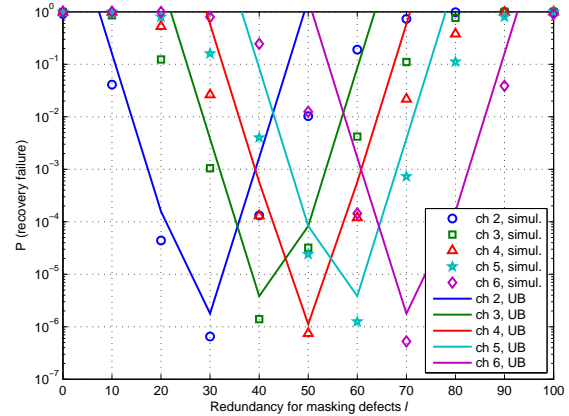}
		\caption{Comparison of simulation results (simul.) and upper bounds (UB) of the probability of recovery failure $P(\widehat{\mathbf{m}} \ne \mathbf{m})$ of BDEC channels in Table \ref{tab:BDEC_channel}.}
		\label{fig:plot_BDEC_RA}
	\end{figure}	

	\begin{table}[t]
		\renewcommand{\arraystretch}{1.3}
		\caption{Optimal Redundancy Allocations $\left(l^*, r^*\right)$ and Their Estimates $(\widehat{l}, \widehat{r})$ and $(\widetilde{l}, \widetilde{r})$ of BDEC}
		\label{tab:BDEC_redundancy}
		\centering
		\small{
		\begin{tabular}{c|c|c|c}
			\hline
			Channel & {$\left(l^*, r^*\right)$} & {$(\widehat{l}, \widehat{r})$} & {$(\widetilde{l}, \widetilde{r})$}    \\ \hline \hline
			1 & (0, 100) & (0, 100) & (0, 100)      \\ \hline
			2 & (30, 70) & (30, 70) & (28.3, 71.7)  \\ \hline
			3 & (40, 60) & (40, 60) & (42.8, 57.2)  \\ \hline
			4 & (50, 50) & (50, 50) & (50, 50)  \\ \hline
			5 & (60, 40) & (60, 40) & (57.2, 42.8)  \\ \hline
			6 & (70, 30) & (70, 30) & (71.7, 28.3)  \\ \hline
			7 & (100, 0) & (100, 0) & (100, 0)      \\ \hline
		\end{tabular}}
		\vspace{-5mm}
	\end{table}
	
\subsection{Redundancy Allocation: BDSC} \label{sec:BDSC_ra}

	A similar approach to redundancy allocation of the BDEC will be used for the BDSC. Instead of minimizing the upper bound on $P\left( \widehat{\mathbf{m}} \ne \mathbf{m} \right)$, we will derive an estimate of $P\left( \widehat{\mathbf{m}} \ne \mathbf{m} \right)$ for the BDSC and the redundancy allocation $(l, r)$ that minimizes this estimate of $P\left( \widehat{\mathbf{m}} \ne \mathbf{m} \right)$ will be used. 
	
	
	
	For the BDC and the BDEC, the number of unmasked defects after encoding failure (i.e., $\mathbf{c} \circ \mathbf{s} \ne \mathbf{c}$) does not affect $P\left( \widehat{\mathbf{m}} \ne \mathbf{m} \right)$. On the other hand, the number of unmasked defects is important for $P\left( \widehat{\mathbf{m}} \ne \mathbf{m} \right)$ of the BDSC. The reason is that the stuck-at errors due to unmasked defects can be treated as random errors and corrected during the decoding stage. Hence, we propose a two-step encoding method (described in Algorithm~\ref{alg:twostep}), which reduces the performance gap between \eqref{eq:BDC_opt_problem} and \eqref{eq:BDC_LE}. 
	
	\begin{algorithm}
		\caption{Two-step Encoding}\label{alg:twostep}
		\begin{algorithmic}[]
			\State \textbf{Step 1:} Try to solve \eqref{eq:BDC_LE}, i.e., $G_0^{\mathcal{U}} {\mathbf{d}} = {\mathbf{b}}^{\mathcal{U}}$.	
			\If{$u < d_0$} go to \textbf{End}. \Comment A solution $\mathbf{d}$ always exists.
			\Else
			\Comment A solution ${\mathbf{d}}$ exists so long as \eqref{eq:BDC_LE_sol_exist} holds.
			\If{$\mathbf{d}$ exists} go to \textbf{End}.
			\Else ~go to \textbf{Step 2}.
			\EndIf
			\EndIf
			\State \textbf{Step 2:}
			\begin{itemize}
				\item Choose $d_0 - 1$ locations among $\mathcal{U}$ and define ${\mathcal{U}}' = \left\{i_1, \ldots, i_{d_0 - 1}\right\}$.
				\item Solve the following linear equation: $G_0^{\mathcal{U}'}{\mathbf{d}} = {\mathbf{b}}^{\mathcal{U}'}$ \qquad \qquad \Comment A solution $\mathbf{d}$ always exists.
			\end{itemize}
			\State \textbf{End}					
		\end{algorithmic}
	\end{algorithm}		
	
	The two-step encoding tries to reduce the number of unmasked defects by using the second step (i.e., Step 2) even though $\mathbf{c} \circ \mathbf{s} \ne \mathbf{c}$. When the encoder fails to solve \eqref{eq:BDC_LE}, the encoder randomly chooses $d_0 - 1$ defect locations among $\mathcal{U}$ and define ${\mathcal{U}}' = \left\{i_1, \ldots, i_{d_0 - 1}\right\}$. Afterwards, the encoder solves $G_0^{\mathcal{U}'}{\mathbf{d}} = {\mathbf{b}}^{\mathcal{U}'}$ where a solution $\mathbf{d}$ always exists by the definition of $d_0$ in \eqref{eq:PLBC_d0}. If $\mathbf{d}$ is obtained in Step 2, then the number of unmasked defects is $u - \left(d_0 - 1 \right)$ instead of $u$. For the BDC and the BDEC, Step 2 cannot improve $P(\widehat{\mathbf{m}} \ne \mathbf{m})$ since unmasked stuck-at defects cannot be corrected during the decoding stage. However, Step 2 is helpful for the BDSC since the stuck-at errors can be regarded as random errors and corrected at the decoder.  
			
	The two-step encoding's complexity is ${\mathcal{O}}\left( u^3\right)$ because both Step 1 and Step 2 are related to solving the linear equations. Also, the bounded distance decoding for estimating $\widehat{\mathbf{z}}$ can be implemented by polynomial decoding algorithms. For PBCH codes, standard algorithms such as Berlekamp-Massey algorithm can be used for decoding. The flow of PLBC's decoding for the BDSC was explained in Section \ref{subsec:PLBC}. 

	We will derive the upper bound on $P\left( \widehat{\mathbf{m}} \ne \mathbf{m} \right)$ where the two-step encoding and the bounded distance decoding are used. 
	
	\begin{theorem}\label{thm:BDSC_bound_fail} The upper bound on $P\left( \widehat{\mathbf{m}} \ne \mathbf{m} \right)$ is given by
		\begin{align}
		\hspace{-3mm} & P\left( \widehat{\mathbf{m}}\ne \mathbf{m} \right) \nonumber  \\
		\hspace{-3mm} & \le \left[ \sum_{u=d_0}^{n} \left\{ \hspace{-1mm} { \binom{n}{u} \beta^{u} \left( 1 - \beta \right)^{n-u}} \min \left\{ \frac{ \sum_{w=d_0}^{u}{B_{0,w} \binom{n-w}{u-w}}}{\binom{n}{u}}, 1\right\}  \right. \right. \nonumber\\
		\hspace{-3mm} & \left. \left.\cdot \sum_{t=t_1 + d_0 - u}^{n}{\binom{n}{t} p^{t} \left( 1 - p \right)^{n-t}} \right\} \right] + \sum_{t=t_1 + 1}^{n}{\binom{n}{t}p^t (1-p)^{n-t}}
		\label{eq:BDSC_bound_fail}
		\end{align}
		where $t_1 = \left\lfloor \frac{d_1 - 1}{2} \right\rfloor$ is the error correcting capability of $\mathcal{C}$.
	\end{theorem}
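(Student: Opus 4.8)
The plan is to decompose $P(\widehat{\mathbf{m}}\ne\mathbf{m})$ by the law of total probability, conditioning first on the number of stuck-at defects $U=u$ (which is Binomial, so $P(U=u)=\binom{n}{u}\beta^u(1-\beta)^{n-u}$) and then on the encoding outcome $E$ of \eqref{eq:BDC_E}. First I would observe that bounded-distance decoding recovers $\mathbf{m}$ correctly whenever the total error vector $\mathbf{y}-\mathbf{c}=(\mathbf{c}\circ\mathbf{s}-\mathbf{c})+\mathbf{z}$ has Hamming weight at most $t_1$, so recovery fails only if $\|(\mathbf{c}\circ\mathbf{s}-\mathbf{c})+\mathbf{z}\|>t_1$. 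By the triangle inequality this weight is at most $\|\mathbf{c}\circ\mathbf{s}-\mathbf{c}\|+\|\mathbf{z}\|$, so a \emph{sufficient} condition for decoding success is $\|\mathbf{c}\circ\mathbf{s}-\mathbf{c}\|+\|\mathbf{z}\|\le t_1$; equivalently, failure requires $\|\mathbf{c}\circ\mathbf{s}-\mathbf{c}\|+\|\mathbf{z}\|>t_1$. Here $\|\mathbf{z}\|$, the number of BSC random errors, is Binomial with parameter $p$ and independent of the defect realization.

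Next I would evaluate the two encoding cases. When encoding succeeds ($E=1$, which always holds for $u<d_0$ and holds with probability $1-P(E=0\mid U=u)$ for $u\ge d_0$), all defects are masked, so $\|\mathbf{c}\circ\mathbf{s}-\mathbf{c}\|=0$ and the residual channel is a pure BSC; recovery then fails only if $\|\mathbf{z}\|>t_1$, contributing the sum $A\triangleq\sum_{t=t_1+1}^{n}\binom{n}{t}p^t(1-p)^{n-t}$. When encoding fails ($E=0$, possible only for $u\ge d_0$), Step~2 of Algorithm~\ref{alg:twostep} masks $d_0-1$ defects, leaving at most $u-d_0+1$ unmasked defects and hence $\|\mathbf{c}\circ\mathbf{s}-\mathbf{c}\|\le u-d_0+1$ stuck-at errors. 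Since a larger residual stuck-at weight can only increase the failure probability, I would upper-bound using exactly $u-d_0+1$, so that failure requires $\|\mathbf{z}\|>t_1-(u-d_0+1)$, i.e.\ $\|\mathbf{z}\|\ge t_1+d_0-u$, giving $P(\widehat{\mathbf{m}}\ne\mathbf{m}\mid U=u,E=0)\le B_u\triangleq\sum_{t=t_1+d_0-u}^{n}\binom{n}{t}p^t(1-p)^{n-t}$.

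Assembling these pieces and writing $P_u\triangleq P(E=0\mid U=u)$, the coefficient multiplying $A$ is $\sum_{u<d_0}P(U=u)+\sum_{u\ge d_0}P(U=u)(1-P_u)\le\sum_{u}P(U=u)=1$, and the remaining contribution is $\sum_{u\ge d_0}P(U=u)\,P_u\,B_u$. The final step is to loosen this into the clean additive form of the statement: I would replace the $A$-coefficient by $1$ and substitute the Lemma~\ref{lemma:BDC_ef_UB_fixed_u} bound $P_u\le\min\{\sum_{w=d_0}^{u}B_{0,w}\binom{n-w}{u-w}/\binom{n}{u},\,1\}$. Multiplying this by $P(U=u)=\binom{n}{u}\beta^u(1-\beta)^{n-u}$ reproduces the first (double-sum) term of \eqref{eq:BDSC_bound_fail}, while the relaxed $A$ gives the last term.

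The main obstacle I anticipate is the bookkeeping around the worst-case number of unmasked defects and the combined-error argument: one must verify that replacing the true residual stuck-at weight by its maximum $u-d_0+1$ is monotone in the correct direction, and that the triangle-inequality sufficient condition for success (equivalently the necessary condition $\|\mathbf{z}\|\ge t_1+d_0-u$ for failure) remains valid even when a random error lands on an unmasked defect position and partially cancels it. Justifying the final relaxation—retaining $A$ with coefficient $1$ and discarding the $-A$ term that would otherwise accompany $B_u$—is where the bound sacrifices tightness to gain its separable shape, and I would flag that the two resulting sums correspond respectively to the \emph{encoding-failure-combined-with-residual-random-errors} event and the \emph{too-many-random-errors-regardless-of-encoding} event.
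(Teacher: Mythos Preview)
Your proposal is correct and follows essentially the same approach as the paper: the decomposition $P(\widehat{\mathbf{m}}\ne\mathbf{m})=P(E=0,\widehat{\mathbf{m}}\ne\mathbf{m})+P(E=1,\widehat{\mathbf{m}}\ne\mathbf{m})$, the use of Step~2 of Algorithm~\ref{alg:twostep} to cap the residual stuck-at errors at $u-(d_0-1)$, the binomial tail $\sum_{t\ge t_1+d_0-u}\binom{n}{t}p^t(1-p)^{n-t}$ conditional on encoding failure, and the invocation of Lemma~\ref{lemma:BDC_ef_UB_fixed_u} for $P(E=0\mid U=u)$ are exactly the ingredients in Appendix~H. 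Your explicit triangle-inequality justification and the bookkeeping around relaxing the $A$-coefficient to $1$ are, if anything, a bit more detailed than what the paper writes, but the argument is the same.
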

	\begin{proof}The proof is given in Appendix~\ref{pf:BDSC_bound_fail}-H.
	\end{proof}
	
	

	During the derivation of the upper bound of \eqref{eq:BDSC_bound_fail}, we regard all the unmasked stuck-at defects as random errors. However, on average, only half of the unmasked defects result in error if $P(S=0) = P(S = 1) = \frac{\beta}{2}$. Thus, we can derive the following estimate of $P\left( \widehat{\mathbf{m}} \ne \mathbf{m} \right)$. 
	
	\begin{corollary}\label{cor:BDSC_estimate_fail} The estimate of $P\left( \widehat{\mathbf{m}} \ne \mathbf{m} \right)$ is given by
		\begin{align}
		& \hspace{-1mm} P\left( \widehat{\mathbf{m}} \ne \mathbf{m} \right) \nonumber \\
		& \hspace{-1mm} \simeq \left[ \sum_{u=d_0}^{n} \left\{ {\binom{n}{u} \beta^{u} \left( 1 - \beta \right)^{n-u}} \min\left\{ \frac{\sum_{w=d_0}^{u}{B_{0,w} \binom{n-w}{u-w}}}{\binom{n}{u}} , 1 \right\} \right.\right. \nonumber \\
		& \quad \cdot \left.\left. \sum_{t=t_1 - \left\lceil \frac{u - d_0 + 1}{2} \right\rceil + 1 }^{n}{\binom{n}{t} p^{t} \left( 1 - p \right)^{n-t}} \right\} \right]\nonumber \\
		& + \sum_{t=t_1 + 1}^{n}{\binom{n}{t}p^t (1-p)^{n-t}} \label{eq:BDSC_estimate_fail}
		\end{align}
		where $\lceil x \rceil$ is the smallest integer not less than $x$.
	\end{corollary}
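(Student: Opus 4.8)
The plan is to follow the derivation of the upper bound in Theorem~\ref{thm:BDSC_bound_fail} and to refine only the single step in which unmasked stuck-at defects are converted into random errors. First I would recall the decomposition behind \eqref{eq:BDSC_bound_fail}: conditioning on the number of defects $U=u$ and on the outcome $E$ of Step~1 of the two-step encoding, recovery fails either when Step~1 succeeds ($E=1$, all $u$ defects masked, so the effective channel is $\mathrm{BSC}(p)$) and the number of random errors $t$ exceeds the correcting capability $t_1$ of $\mathcal{C}$, or when Step~1 fails ($E=0$) and the combined weight of unmasked stuck-at errors and random errors exceeds $t_1$. The first event contributes the standalone tail $\sum_{t=t_1+1}^{n}\binom{n}{t}p^t(1-p)^{n-t}$, which is carried over unchanged into \eqref{eq:BDSC_estimate_fail}.

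Next I would isolate the $E=0$ contribution. By the definition of $d_0$ and the second step of Algorithm~\ref{alg:twostep}, when Step~1 fails the encoder masks $d_0-1$ defects, leaving exactly $u-(d_0-1)=u-d_0+1$ unmasked defects, while $P(E=0\mid U=u)$ is controlled by the $\min\{\cdot,1\}$ factor of Lemma~\ref{lemma:BDC_ef_UB_fixed_u}. In Theorem~\ref{thm:BDSC_bound_fail} each of these $u-d_0+1$ unmasked defects is pessimistically counted as an error, so by the triangle inequality $\{\text{weight}>t_1\}\subseteq\{(u-d_0+1)+t>t_1\}$ and the decoder is declared to fail once $t\ge t_1+d_0-u$; this is precisely the lower summation limit of the inner sum in \eqref{eq:BDSC_bound_fail}.

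The actual refinement is to replace this worst-case count by its expected value. Since the defects are symmetric with $P(S=0)=P(S=1)=\beta/2$, an unmasked defect produces a stuck-at error only when its stuck value disagrees with the intended codeword bit, which occurs with probability $\tfrac12$; hence on average only $(u-d_0+1)/2$ of the unmasked defects are genuine errors, and rounding up gives $\lceil (u-d_0+1)/2\rceil$. Substituting this count into the failure condition $\lceil (u-d_0+1)/2\rceil+t>t_1$ gives the shifted lower limit $t\ge t_1-\lceil (u-d_0+1)/2\rceil+1$. Inserting this limit in place of $t_1+d_0-u$, while keeping $P(U=u)=\binom{n}{u}\beta^u(1-\beta)^{n-u}$ and the $\min\{\cdot,1\}$ factor intact, reproduces the bracketed double sum of \eqref{eq:BDSC_estimate_fail}, and appending the unchanged $E=1$ tail completes the estimate.

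I expect the main subtlety to be bookkeeping and honesty about rigor rather than a deep technical obstacle. Because an \emph{average} error count is being substituted into a deterministic decoding threshold, the resulting quantity is no longer a guaranteed upper bound, which is exactly why the statement asserts an approximation ``$\simeq$'' rather than ``$\le$''. I would therefore be explicit about the two heuristic ingredients: that the $\tfrac12$-thinning of stuck-at errors and the independent $\mathrm{BSC}$ errors are treated additively (ignoring possible cancellation that only helps the decoder), and that the $\lceil\cdot\rceil$ rounding is the conservative choice. A useful sanity check I would include is that for $u=d_0$ one has $u-d_0+1=1$ and $\lceil 1/2\rceil=1$, so the estimate and the Theorem~\ref{thm:BDSC_bound_fail} bound share the same lower limit $t\ge t_1$ there, with the two expressions diverging only as $u$ grows.
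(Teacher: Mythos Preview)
Your proposal is correct and follows essentially the same approach as the paper: the paper's proof of this corollary simply states that, because $P(S=0)=P(S=1)=\beta/2$, only half of the $u-d_0+1$ unmasked defects produce stuck-at errors on average, so the condition $t\ge t_1+d_0-u$ in \eqref{eq:P_D0E0Uu1} is replaced by $t\ge t_1-\lceil(u-d_0+1)/2\rceil+1$, yielding \eqref{eq:BDSC_estimate_fail} from \eqref{eq:BDSC_bound_fail}. Your write-up is in fact more detailed than the paper's own argument, and your remarks on why the result is an approximation rather than a bound, and the $u=d_0$ sanity check, are accurate additions.
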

	\begin{proof}
	The proof is given in Appendix~\ref{pf:BDSC_bound_fail}. 
	\end{proof}
	
	In order to compare the optimal redundancy $(l^*, r^*)$ and the estimated redundancy allocation, we consider the several channels shown in Table \ref{tab:BDSC_channel}. Channel 1 and Channel 7 are equivalent to the BSC and the BDC, respectively. For the other channels from Channel 2 to Channel 6, their lower bounds of the capacity $C_{\text{BDSC}}^{\text{lower}}$ of \eqref{eq:capacity_lower_BDSC} and the upper bounds $C_{\text{BDSC}}^{\text{upper}}$ are almost the same. Thus, we can estimate $C_{\text{BDSC}}^{\text{enc}}$ from bounds although the closed-form of $C_{\text{BDSC}}^{\text{enc}}$ is not known. 
	
	
	It is worth mentioning that all the channel parameters are chosen to have almost the same $\widetilde{p} \simeq 4 \times 10^{-3}$, which was given by~\eqref{eq:p_tilde}. Hence, all the channels show similar $P\left( \widehat{\mathbf{m}} \ne \mathbf{m} \right)$ for $(l, r) = (0, 100)$ which represents the case when the defect information is not used. On the other hand, each channel has different $C_{\text{BDSC}}^{\text{enc}}$. The larger $\beta$, the more defect information we can obtain, which results in the larger $C_{\text{BDSC}}^{\text{enc}}$. We apply $\left[ n = 1023, k=923, l \right]$ PBCH codes in Table \ref{tab:PLBC}.	
	
	Fig.~\ref{fig:plot_BDSC_RA} compares the simulation results and the estimates of $P\left( \widehat{\mathbf{m}} \ne \mathbf{m} \right)$ for the channels in Table~\ref{tab:BDSC_channel}, which shows that the estimates of $P\left( \widehat{\mathbf{m}} \ne \mathbf{m} \right)$ given by \eqref{eq:BDSC_estimate_fail} match well with simulation results. Hence, we can choose the redundancy allocation minimizing the estimates instead of the simulation results in spite of the binomial approximation of \eqref{eq:B0w_approximate}. The redundancy allocation that minimizes the estimate of $P\left( \widehat{\mathbf{m}} \ne \mathbf{m} \right)$ is the estimate of the optimal redundancy allocation, i.e., $(\widehat{l}, \widehat{r})$. The simulation results of $P\left( \widehat{\mathbf{m}} \ne \mathbf{m} \right) < 10^{-8}$ are incomplete because of their impractical computations. 
	
	Table~\ref{tab:BDSC_redundancy} shows that the estimate of the optimal redundancy allocation $(\widehat{l}, \widehat{r})$ is the same as the optimal redundancy allocation $(l^*, r^*)$ for the channels in Table~\ref{tab:BDSC_channel}. Thus, we can accurately estimate the optimal redundancy allocation without simulations. The estimate of optimal redundancy allocation requires much less computations than Monte-Carlo simulations.

	\begin{table}[t]
		\renewcommand{\arraystretch}{1.3}
		\caption{Several Channel Parameters of the BDSC}
		\label{tab:BDSC_channel}
		\centering
		{\small
			\begin{tabular}{c|c|c|c|c}
				\hline
				Channel & {$p$}                & {$\beta$}            & $C_{\text{BDSC}}^{\text{lower}}$ & $C_{\text{BDSC}}^{\text{upper}}$  \\ \hline \hline
				1       & $4.0 \times 10^{-3}$ & 0                    & \multicolumn{2}{c}{0.9624} \\ \hline
				2       & $3.0 \times 10^{-3}$ & $2.0 \times 10^{-3}$ & 0.9685 & 0.9686        \\ \hline
				3       & $2.5 \times 10^{-3}$ & $3.0 \times 10^{-3}$ & 0.9718 & 0.9719        \\ \hline
				4       & $2.0 \times 10^{-3}$ & $4.0 \times 10^{-3}$ & 0.9752 & 0.9753        \\ \hline
				5       & $1.0 \times 10^{-3}$ & $6.0 \times 10^{-3}$ & 0.9826 & 0.9827        \\ \hline
				6       & $5.0 \times 10^{-4}$ & $7.0 \times 10^{-3}$ & 0.9868 & 0.9868        \\ \hline
				7       &  0                   & $8.0 \times 10^{-3}$ & \multicolumn{2}{c}{0.9920}        \\ \hline
			\end{tabular}
		}
		\vspace{-2mm}
	\end{table}	
	
	\begin{figure}[!t]
		\centering
		\includegraphics[width=0.4\textwidth]{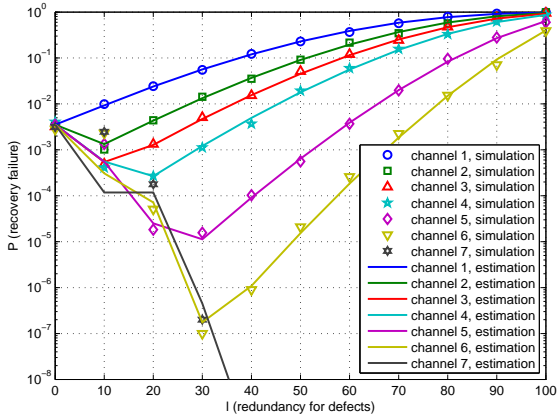}
		\caption{Comparison of simulation results and estimates of the probability of recovery failure $P\left( \widehat{\mathbf{m}} \ne \mathbf{m} \right)$.}
		\label{fig:plot_BDSC_RA}
		\vspace{-3mm}
	\end{figure}
	
	\begin{table}[t]
	\renewcommand{\arraystretch}{1.3}
	\caption{Optimal Redundancy Allocations $(l^*, r^*)$ and Their Estimates $(\widehat{l}, \widehat{r})$ of BDSC by~\eqref{eq:BDSC_estimate_fail}}
	\label{tab:BDSC_redundancy}
	\centering
	{\small
		\begin{tabular}{c|c|c}
			\hline
			Channel & $(l^*, r^*)$ & $(\widehat{l}, \widehat{r})$ \\ \hline \hline
			1 & (0, 100) & (0, 100)    \\ \hline
			2 & (10, 90) & (10, 90)   \\ \hline
			3 & (10, 90) & (10, 90)  \\ \hline
			4 & (20, 80) & (20, 80)   \\ \hline
			5 & (30, 70) & (30, 70)   \\ \hline
			6 & (30, 70) & (30, 70)   \\ \hline
			7 & (100, 0) & (100, 0) \\ \hline
		\end{tabular}}
		\vspace{-5mm}
	\end{table}
	
	Fig.~\ref{fig:plot_BDSC_RA} also shows that the optimal redundancy allocation significantly improves $P\left( \widehat{\mathbf{m}} \ne \mathbf{m} \right)$. For example, $P\left( \widehat{\mathbf{m}} \ne \mathbf{m} \right)$ of channel 6 is $1.00 \times 10^{-7}$ with the optimal redundancy allocation $(l^*, r^*) = (30, 70)$ whereas $P\left( \widehat{\mathbf{m}} \ne \mathbf{m} \right)$ is $2.80 \times 10^{-3}$ with $(l, r) = (0, 100)$. Thus, it is important to find and use the optimal redundancy allocation for better $P\left( \widehat{\mathbf{m}} \ne \mathbf{m} \right)$.
	
	In addition, it is worth mentioning that $P\left( \widehat{\mathbf{m}} \ne \mathbf{m} \right)$ for $(l^*, r^*)$ improves as $\beta$ increases in Fig.~\ref{fig:plot_BDSC_RA}. $P\left( \widehat{\mathbf{m}} \ne \mathbf{m} \right)$ of all the channels are almost the same for the redundancy allocation of $\left(l=0, r=n-k\right)$ because all the channel parameters are chosen to have the same $\widetilde{C}_{\text{BDSC}}^{\text{min}}$. As $\beta$ increases, $P\left( \widehat{\mathbf{m}} \ne \mathbf{m} \right)$ for $(l^*, r^*)$ improves because the PLBC can exploit more defect information as indicated by the lower and upper bounds on $C_{\text{BDSC}}^{\text{enc}}$ in Table \ref{tab:BDSC_channel}. 

\section{\uppercase{Conclusion}} \label{sec:conclusion}

	The redundancy allocation of finite-length codes for memory with permanent stuck-at defects and transient errors was investigated. We derived the upper bound on the probability of recovery failure for the BDEC and the estimate of the probability of recovery failure for the BDSC. Based on these analytical results, we estimated the optimal redundancy allocation effectively. The estimated redundancy allocation matches the optimal redundancy allocation well while requiring much less computation than Monte-Carlo simulations.

\appendices

\section*{A. Proof of Proposition~\ref{thm:capacity_BDEC}\label{pf:capacity_BDEC}}

	By Gelfand-Pinsker theorem~\cite{Gelfand1980}, 
	\begin{equation}
		C_{\text{BDEC}}^{\text{enc}} = \max_{P(U \mid S), X(U, S)}{(I(U; Y) - I(U; S))}
	\end{equation}
	where $\left| \mathcal{U} \right| \le \min{ \left\{ \left| \mathcal{X} \right| \cdot \left| \mathcal{S} \right|, \left| \mathcal{Y} \right| + \left| \mathcal{S} \right| - 1 \right\}}$. 
	It is clear that $I(U; Y) - I(U; S) = H(U \mid S) - H(U \mid Y)$. For $S = \lambda$, set $U \sim \text{Bern}(1/2)$ and $X=U$, i.e., $U$ is a Bernoulli random variable with parameter $\frac{1}{2}$. If $S \ne \lambda$, we set $U=X=S$. Then, $H(U \mid S) = H(X \mid S)  = 1 - \beta$. In addition, $	H(U\mid Y) = H(X) - I(X; Y) = \alpha$ where $H(X) = 1$ follows from $P(X = 0) = P(X = 1)$ for both stuck-at defects and normal cells. We should minimize $H(U \mid Y)$ to obtain $C_{\text{BDEC}}^{\text{enc}}$, which can be achieved by setting $I(X;Y) = 1 - \alpha$ (i.e., the capacity of the BEC). Hence, $C_{\text{BDEC}}^{\text{enc}} = 1 - \alpha - \beta$. 

\section*{B. Proof of Proposition~\ref{thm:capacity_bound_BDSC} \label{pf:capacity_bound_BDSC}}
	The upper bound of \eqref{eq:capacity_bound_BDSC} is easy to see because of \eqref{eq:capacity_BDSC_tilde} where we assume that the stuck-at defects do not suffer from random errors. Unlike the BDEC, the unmasked defects in the BDSC can be corrected by the decoder. Hence, we can allow the encoder to mask a fraction of stuck-at defects. Suppose that $\eta$ denotes the fraction of unmasked stuck-at defects during encoding. We need to find the optimal $\eta^*$, which makes it complicated to derive the capacity in~\cite{Heegard1983plbc}. By setting $\eta = 0$ instead of using the optimal $\eta^*$, the lower bound of \eqref{eq:capacity_lower_BDSC} can be derived, which is similar to the proof of Proposition \ref{thm:capacity_BDEC}. For $S = \lambda$, set $U \sim \text{Bern}(1/2)$ and $X=U$. For $S \ne \lambda$, we set $U=X=S$ which means that $\eta = 0$. Then,  $H(U \mid S) = 1 - \beta$. In addition, $H(U\mid Y) = H(X \mid Y) = h(p)$. Hence, $C_{\text{BDSC}}^{\text{lower}} = 1 - \beta - h(p)$. 

\section*{C. Proof of Proposition ~\ref{thm:BDC_capacity} \label{pf:BDC_capacity}}
		Suppose that each element of $G_0$ is selected at random with equal probability from $\{0, 1\}$. Then,
\begin{align}
P\left( E=0 \right) & = P\left( E=0, n(\beta - \epsilon) \le |{\mathcal{U}}| \le n(\beta + \epsilon) \right) + \epsilon' \nonumber\\
& \le \sum_{u=n(\beta - \epsilon)}^{n(\beta + \epsilon)}{P\left(\rank\left(G_0^{{\mathcal{U}}}\right) < u \mid |{\mathcal{U}}|=u\right)} + \epsilon'  \nonumber\\
& \le (2n\epsilon + 1) 2^{n \left( \frac{k}{n}  - (1 - \beta) + \epsilon\right)} + \epsilon' \label{eq:BDC_CA}
\end{align}
where we assume that $n(\beta \pm \epsilon)$ are integers without loss of generality. If $R = \frac{k}{n} < C_{\text{BDC}} - \epsilon$, $P\left( E=0 \right)$ converges to zero as $n \rightarrow \infty$. 

\section*{D. Proof of Lemma~\ref{lemma:BDC_ef_UB_fixed_u}\label{pf:BDC_ef_UB_fixed_u}}

	First, we will show that
	\begin{align}
		&P\left(E=0 \mid U=u\right)  \nonumber \\
		&= \sum_{j=1}^{u}{\left(1 - \frac{1}{2^j} \right)P \left( \rank \left( G_0^{\mathcal{U}} \right) =u - j \mid U=u \right)}.\label{eq:BDC_exact}
	\end{align}
	If $\rank \left( G_0^{\mathcal{U}} \right) = u$, \eqref{eq:BDC_LE} has at least one solution since \eqref{eq:BDC_LE_sol_exist} holds, i.e., $P\left(E=0 \mid U=u\right)=0$. If $\rank \left( G_0^{\mathcal{U}} \right) = u - j$ for $1 \le j \le u$, the last $j$ rows of the row reduced echelon form of $G_0^{\mathcal{U}}$ are zero vectors. In order to satisfy \eqref{eq:BDC_LE_sol_exist}, the last $j$ elements of the column vector $\mathbf{b}^{\mathcal{U}}$ should also be zeros. The probability that the last $j$ elements of $\mathbf{b}^{\mathcal{U}}$ are zeros is $\frac{1}{2^j}$ since $P(S=0 \mid S \ne \lambda) = P(S=1 \mid S \ne \lambda) = \frac{1}{2}$. Thus, $P\left(E=0 \mid U=u\right)$ is given by \eqref{eq:BDC_exact}. Also,
	\begin{align}
		\frac{P \left( \rank \left( G_0^{\mathcal{U}} \right) < u \mid U=u \right)}{2}
		 \le P\left(E=0 \mid U=u\right) \nonumber \\
		 \le P \left( \rank \left( G_0^{\mathcal{U}} \right) < u \mid U=u \right). \label{eq:enc_fail_upper0}
	\end{align}
	
	Suppose that there exists a nonzero codeword ${\mathbf{c}}^{\perp} \in {\mathcal{C}}_{0}^{\perp}$ of Hamming weight $w$. Note that $G_0$ is the parity check matrix of ${\mathcal{C}}_{0}^{\perp}$. Let $\Psi_w({\mathbf{c}}^{\perp})=\left\{ i \mid c_i^{\perp} \ne 0 \right\}$ denote the locations of nonzero elements of ${\mathbf{c}}^{\perp}$ and ${\mathcal{U}} = \left\{i_1,\ldots,i_u \right\}$ denote the locations of $u$ defects.
	
	If $\Psi_w({\mathbf{c}}^{\perp}) \subseteq \mathcal{U}$, $\rank \left( G_0^{\mathcal{U}} \right) < u$. Note that $G_0^{\Psi_{w}({\mathbf{c}}^{\perp})}$ is a submatrix of $G_0^{\mathcal{U}}$ and the rows of $G_0^{\Psi_{w}({\mathbf{c}}^{\perp})}$ are linearly dependent since $G_0^T {\mathbf{c}}^{\perp} = {\mathbf{0}}$.
	
	For any $\mathbf{c}^{\perp}$ such that $\Psi_w({\mathbf{c}}^{\perp}) \subseteq {\mathcal{U}}$, the number of possible $\mathcal{U}$ is $\binom{n-w}{u-w}$. Due to double counting, the number of $\mathcal{U}$ which results in $\rank \left( G_0^{\mathcal{U}} \right) < u$  will be less than or equal to $\sum_{w=d_{0}}^{u}{B_{0,w} \binom{n-w}{u-w}}$. Since the number of all possible $\mathcal{U}$ such that $U=u$ is $\binom{n}{u}$,
	\begin{equation} \label{eq:BDC_upper_bound_rank}
		P\left(\rank \left( G_0^{\mathcal{U}} \right) < u \mid U=u \right) \le \frac{\sum_{w=d_{0}}^{u}{B_{0, w} \binom{n-w}{u-w}}}{\binom{n}{u}}.
	\end{equation}
	
	From \eqref{eq:enc_fail_upper0} and \eqref{eq:BDC_upper_bound_rank}, the upper bound on $P\left(E=0 \mid U=u \right)$ is given by \eqref{eq:enc_fail_upper}. Note that we set $P\left(E=0 \mid U=u \right) = 1$ if $\frac{\sum_{w=d_{0}}^{u}{B_{0, w} \binom{n-w}{u-w}}}{\binom{n}{u}} \ge 1$.

\section*{E. Proof of Lemma~\ref{lemma:BDC_ef_exact}\label{pf:BDC_ef_exact}}

	The proof has two parts. First, we will show that
	\begin{equation} \label{eq:BDC_exact_condition_pf1}
	P\left(\rank \left( G_0^{\mathcal{U}} \right) < u \mid U=u \right) = \frac{\sum_{w=d_{0}}^{u}{B_{0, w} \binom{n-w}{u-w}}}{\binom{n}{u}}
	\end{equation}
	for $u \le d_0 + t_0$ where $t_0 = \left\lfloor \frac{d_0 - 1}{2} \right\rfloor$, which means that there is no double counting in \eqref{eq:BDC_upper_bound_rank}. Second, we will prove that
	\begin{align}
	& P\left(\rank \left( G_0^{\mathcal{U}} \right) < u \mid U=u \right) \nonumber \\
	& \quad = P\left(\rank \left( G_0^{\mathcal{U}} \right) = u-1 \mid U=u \right) \label{eq:BDC_exact_condition_pf2}
	\end{align}
	for $u \le d_0 + t_0$, i.e., $P\left(\rank \left( G_0^{\mathcal{U}} \right) \le u-2 \mid U=u \right)=0$.
	
	Then, $P\left(E=0 \mid U=u \right)$ is given by
	\begin{align}
	\hspace{-5mm}P\left(E=0 \mid U=u \right) 
		&= \frac{P \left( \rank \left( G_0^{\mathcal{U}} \right) = u - 1 \mid U=u \right)}{2}  \label{eq:BDC_exact_condition_step1}\\
	    &= \frac{1}{2} \cdot \frac{\sum_{w=d_{0}}^{u}{B_{0, w} \binom{n-w}{u-w}}}{\binom{n}{u}} \label{eq:BDC_exact_condition_step2}
	\end{align}
	where \eqref{eq:BDC_exact_condition_step1} follows from \eqref{eq:BDC_exact} and \eqref{eq:BDC_exact_condition_pf2}. Also, \eqref{eq:BDC_exact_condition_step2} follows from \eqref{eq:BDC_exact_condition_pf1}.
	
	1) Proof of \eqref{eq:BDC_exact_condition_pf1}
	
	Suppose that there are two nonzero codewords ${\mathbf{c}}_1^{\perp}, {\mathbf{c}}_2^{\perp} \in {\mathcal{C}}_0^{\perp}$ such that $\|{\mathbf{c}}_1^{\perp}\| = w_1$ and $\|{\mathbf{c}}_2^{\perp}\| = w_2$. Without loss of generality, assume that $d_0 \le w_1 \le w_2$. The locations of nonzero elements in ${\mathbf{c}}_1^{\perp}$ and ${\mathbf{c}}_2^{\perp}$ are given by $	\Psi_{w_1}\left({\mathbf{c}}_1^{\perp}\right)= \left\{ i_{1,1},\ldots,i_{1, w_1} \right\}$ and $	\Psi_{w_2}\left({\mathbf{c}}_2^{\perp}\right)= \left\{ i_{2,1},\ldots,i_{2, w_2} \right\}$. 
	
	Let $\Psi_{\alpha}=\left\{i_1, \ldots, i_{\alpha} \right\}$ denote $\Psi_{\alpha} = \Psi_{w_1}\left({\mathbf{c}}_1^{\perp}\right) \cap \Psi_{w_2}\left({\mathbf{c}}_2^{\perp}\right)$. Then,  $\Psi_{w_1}\left({\mathbf{c}}_1^{\perp}\right) = \Psi_{\alpha} \cup \left\{ i_{1,1}',\ldots,i_{1, \beta_1}' \right\}$ and $\Psi_{w_2}\left({\mathbf{c}}_2^{\perp}\right) = \Psi_{\alpha} \cup \left\{ i_{2,1}',\ldots,i_{2, \beta_2}' \right\}$ where $i_{1, j_1}'$ for $j_1 \in \left\{1, \ldots, \beta_1 \right\}$ and $i_{2, j_2}'$ for $j_2 \in \left\{1, \ldots, \beta_2 \right\}$ are the reindexed locations of nonzero elements of ${\mathbf{c}}_1^{\perp}$ and ${\mathbf{c}}_2^{\perp}$ that are mutually disjoint with $\Psi_{\alpha}$. Note that $\left\{ i_{1,1}',\ldots,i_{1, \beta_1}' \right\} \cap \left\{ i_{2,1}',\ldots,i_{2, \beta_2}' \right\} = \emptyset $, $\beta_1 = w_1 - \alpha$ and $\beta_2 = w_2 - \alpha$.
	
	Due to the property of linear codes, ${\mathbf{c}}_3^{\perp} = {\mathbf{c}}_1^{\perp} + {\mathbf{c}}_2^{\perp}$ is also a codeword of ${\mathcal{C}}_0^{\perp}$, i.e., ${\mathbf{c}}_3^{\perp} \in {\mathcal{C}}_0^{\perp}$ and $\|{\mathbf{c}}_3^{\perp} \|=\beta_1 + \beta_2$. Also, the following conditions hold because of the definition of $d_0$.
	\begin{equation*}
		\alpha + \beta_1 \ge d_0, \quad \alpha + \beta_2 \ge d_0, \quad 	\beta_1 + \beta_2 \ge d_0
	\end{equation*}
	Thus, we can claim that $2 \left( \alpha + \beta_1 + \beta_2 \right) \ge 3 d_0$, which results in $\alpha + \beta_1 + \beta_2 \ge d_0 + \left\lfloor \frac{d_0 + 1}{2} \right\rfloor = d_0 + t_0 + 1$ since $\alpha + \beta_1 + \beta_2$ has to be an integer.
	
	For double counting to occur in \eqref{eq:BDC_upper_bound_rank}, there should exist at least two codewords ${\mathbf{c}}_1^{\perp}$ and ${\mathbf{c}}_2^{\perp}$ such that $\Psi_{w_1}\left({\mathbf{c}}_1^{\perp}\right) \cup \Psi_{w_2}\left({\mathbf{c}}_2^{\perp}\right) \subseteq\mathcal{U}$. It means that double counting occurs only if $u \ge \alpha + \beta_1 + \beta_2 \ge d_0 + t_0 + 1$. Thus, there is no double counting for $u \le d_0 + t_0$. For $u \le d_0 + t_0$, there exists at most one codeword ${\mathbf{c}}^{\perp}$ such that $\Psi_{w}\left({\mathbf{c}}^{\perp} \right) \subseteq {\mathcal{U}}$.
	
	2) Proof of \eqref{eq:BDC_exact_condition_pf2}
	
	It is clear that $\rank \left( G_0^{\mathcal{U}}\right) = u - 1$ if and only if there exists only one nonzero codeword $\mathbf{c}^{\perp}$ such that $\Psi_{w}\left( {\mathbf{c}}^{\perp} \right) \subseteq {\mathcal{U}}$. Note that $\rank \left( G_0^{\mathcal{U}}\right) < u - 1$ if and only if $\mathcal{U}$ includes the locations of nonzero elements of at least two nonzero codewords. We have already shown that there exists at most one nonzero codeword ${\mathbf{c}}^{\perp}$ such that $\Psi_{w}\left( {\mathbf{c}}^{\perp} \right) \subseteq {\mathcal{U}}$ for $u \le d_0 + t_0$.

\section*{F. Proof of Theorem~\ref{thm:BDEC_UB}\label{pf:BDEC_UB}}	

	From $P\left( \widehat{\mathbf{m}}  \ne {\mathbf{m}} \right) = P\left(E=0\right) + P\left(E=1, D=0\right)$, we derive the upper bounds on $P\left(E=0\right)$ and $P\left(E=1, D=0\right)$ respectively. The upper bounds on $P\left(E=0\right)$ was shown in Corollary~\ref{cor:BDC_UB_1} and \ref{cor:BDC_UB_2}. If the additive encoding succeeds (i.e., $E=1$), then the corresponding channel is equivalent to the BEC with the erasure probability $\alpha$. The upper bound is given by
	\begin{align}
		P\left(E=1, D=0\right) & \le \sum_{e = d_1}^{n}{\alpha^e \left(1 - \alpha \right)^{n-e}
			\sum_{w=d_1}^{e}{A_{w} \binom{n-w}{e-w}}} \nonumber \\
		& = 2^{-r} \left(1 + \alpha \right)^n \label{eq:RA_D}
	\end{align}
which follows from $
		P\left(D=0 \mid |{\mathcal{E}}|=e \right) \le \frac{\sum_{w=d_{1}}^{e}{A_{w} \binom{n-w}{e-w}}}{\binom{n}{e}}$. If $A_w = 2^{-r} \binom{n}{w}$, \eqref{eq:RA_D} can be derived by a similar way as for Corollary~\ref{cor:BDC_UB_2}. 

\section*{G. Proof of Corollary~\ref{cor:BDEC_KKT}\label{pf:BDEC_KKT}}
	Suppose that $l$ and $r$ are real values. Since the objective function is convex and other constraints are linear, the optimization problem of \eqref{eq:BDEC_opt_prob} is convex. The Lagrangian $L$ is given by
	\begin{align}
	&L\left( l, r, \lambda_1, \lambda_2, \lambda_3, \lambda_4, \nu \right) = 2^{-l} \left(1 + \beta \right)^n + 2^{-r} \left( 1 + \alpha \right)^n \nonumber \\
	& + \lambda_1 (-l) + \lambda_2 (-r) + \lambda_3 \left\{l - (n - k)\right\} + \lambda_4 \left\{r - (n - k)\right\}  \nonumber \\
	& + \nu \left\{ l + r - (n - k) \right\}
	\end{align}
	where $\lambda_i$ for $i = 1, 2, 3, 4$ are the Lagrange multipliers associated with the inequality constraints and $\nu$ is the Lagrange multiplier with the equality constraint. 
	
	The KKT conditions can be derived as follows.	
	\begin{align}
	\nabla L & = 0 \label{eq:appendix_KKT_gradient_expand}\\
	-l & \le 0, \quad -r \le 0 \\
	l - (n - k) & \le 0,\quad r - (n - k) \le 0 \\
	l + r - (n - k ) & = 0 \label{eq:appendix_KKT_equality} \\
	\lambda_i & \ge 0, \quad i = 1,\ldots, 4 \\
	\lambda_1 l & = 0, \quad \lambda_2 r = 0 \\
	\lambda_3 \left\{ l - (n - k) \right\} & = 0, \quad \lambda_4 \left\{ r - (n - k) \right\} = 0
	\end{align}
		
	
	
	If $\alpha$ is much greater than $\beta$ such that $\frac{1+\alpha}{1+\beta} > 2^{1 - R} $, then we can claim that $2^{-l} \left(1 + \beta \right)^n < 2^{-r} \left(1 + \alpha \right)^n$ for any $(l, r)$ where $l+r = n-k$. By \eqref{eq:appendix_KKT_gradient_expand}, it is true that  
	\begin{equation}
		2^{-r} \left(1 + \alpha \right)^n - 2^{-l} \left(1 + \beta \right)^n = \lambda_1' - \lambda_2' - \lambda_3' + \lambda_4' > 0
	\end{equation}
	where $\lambda_i' = \frac{\lambda_i}{\ln{2}}$. Thus, $\lambda_1 + \lambda_4 > \lambda_2 + \lambda_3 \ge 0$. It is clear that $(l, r) = (0, n-k)$ satisfies the KKT conditions since $\lambda_2 = \lambda_3 = 0$ due to complement slackness. If $\beta$ is much greater than $\alpha$ such that $\frac{1+\alpha}{1+\beta} < 2^{-(1 - R)}$, then it can be shown that $(l, r) = (n-k, 0)$ satisfies the KKT conditions similarly. 
	
	Otherwise, suppose that $0<l<n-k$ and $0 < r < n-k$. Due to complementary slackness, $\lambda_1 = \lambda_2 = \lambda_3 = \lambda_4 = 0$. Thus, \eqref{eq:appendix_KKT_gradient_expand} will be as follows.
	\begin{align}
		- \ln{2} \cdot 2^{-l} \left( 1 + \beta \right)^n + \nu &= 0 \label{eq:appendix_opt_sol_con3_1}\\
		- \ln{2} \cdot 2^{-r} \left( 1 + \alpha \right)^n + \nu &= 0 \label{eq:appendix_opt_sol_con3_2}
	\end{align}
	By \eqref{eq:appendix_opt_sol_con3_1} and \eqref{eq:appendix_opt_sol_con3_2},
	\begin{equation} \label{eq:appendix_opt_sol_con3_3}
		2^{-l} \left( 1 + \beta \right)^n = 2^{-r} \left( 1 + \alpha \right)^n.
	\end{equation}
	Then, we can derive \eqref{eq:BDEC_opt_sol_1} and \eqref{eq:BDEC_opt_sol_2} by \eqref{eq:appendix_KKT_equality} and \eqref{eq:appendix_opt_sol_con3_3}. 		
\section*{H. Proof of Theorem~\ref{thm:BDSC_bound_fail} and Corollary~\ref{cor:BDSC_estimate_fail}\label{pf:BDSC_bound_fail}}

	The probability of recovery failure $P\left( \widehat{\mathbf{m}} \ne \mathbf{m} \right)$ is given by
	\begin{equation*}
		P\left( \widehat{\mathbf{m}} \ne {\mathbf{m}} \right) = P(E=0, \widehat{\mathbf{m}} \ne {\mathbf{m}}) + P(E=1, \widehat{\mathbf{m}} \ne \mathbf{m}).
	\end{equation*}
	The stuck-at errors due to unmasked defects can be corrected at the decoder even if the encoding fails. First, we will derive the upper bound on $P(E=0, \widehat{\mathbf{m}} \ne \mathbf{m})$. By the chain rule, $P(E=0, \widehat{\mathbf{m}} \ne \mathbf{m})$ is given by
	\begin{align}
	{P(E=0, \widehat{\mathbf{m}} \ne \mathbf{m})} &= \sum_{u=1}^{n} { \left\{ P(U=u) P(E=0 \mid U=u)  \right. } \nonumber \\
	& \quad \left. \cdot	{P(\widehat{\mathbf{m}} \ne {\mathbf{m}} \mid E=0,U=u)} \right\} \nonumber
	\end{align}
	where $P(U=u) = \binom{n}{u} \beta^{u} \left( 1 - \beta \right)^{n-u}$. In addition, the upper bound on $P(E=0 |U=u)$ is given by Lemma~\ref{lemma:BDC_ef_UB_fixed_u}. Also, $P(\widehat{\mathbf{m}} \ne {\mathbf{m}} \mid E=0, U=u)$ is given by
	\begin{align}
		{P(\widehat{\mathbf{m}} \ne {\mathbf{m}} \mid E=0, U=u)} &\le P\left( t \ge t_1 - u + d_0 \right). \label{eq:P_D0E0Uu1}
	\end{align}
	where $t$ is the number of random errors. Note that $u - \left( d_0 - 1 \right)$ represents the number of unmasked defects when two-step encoding fails. Since the number of random errors can be modeled by the binomial random variable, $P(\widehat{\mathbf{m}} \ne {\mathbf{m}} \mid E=0, U=u)$ is given by
	\begin{align}
	\hspace{-2mm}P\left( \widehat{\mathbf{m}} \ne {\mathbf{m}} \mid E=0, U=u  \right) 
	&\le \hspace{-5mm} \sum_{t=t_1 + d_0 - u}^{n}\hspace{-2mm}{\binom{n}{t} p^{t} \left( 1 - p \right)^{n-t}}. 
	\end{align}
	Hence, 
	\begin{align}
		&P(E=0, \widehat{\mathbf{m}} \ne {\mathbf{m}}) \nonumber \\
		& \le \sum_{u=d_0}^{n} \left\{ \binom{n}{u} {\beta^{u} \left( 1 - \beta \right)^{n-u}} {\min \left\{ \frac{\sum_{w=d_0}^{u}{B_{0,w} \binom{n-w}{u-w}}}{\binom{n}{u}} , 1\right\}} \right. \nonumber \\
		&\quad\left. \cdot \sum_{t=t_1 + d_0 - u}^{n}{\binom{n}{t} p^{t} \left( 1 - p \right)^{n-t}} \right\}.
	\end{align}
	
	Now, we will derive the upper bound on $P(E=1, \widehat{\mathbf{m}} \ne \mathbf{m})$. If the encoding succeeds, the channel is equivalent to the BSC.
	\begin{align}
		P(E=1, \widehat{\mathbf{m}} \ne {\mathbf{m}}) \le \sum_{t=t_1 + 1}^{n}{\binom{n}{t}p^t (1-p)^{n-t}}. 
	\end{align}
	
	For the proof of Corollary~\ref{cor:BDSC_estimate_fail}, we will change \eqref{eq:P_D0E0Uu1} by considering $P(S=0) = P(S=1) = \frac{\beta}{2}$. Because only the half of unmasked defects result in stuck-at errors on average, 
	\begin{align}
	& {P(\widehat{\mathbf{m}} \ne {\mathbf{m}} \mid E=0, U=u)} \nonumber \\
	& \simeq P\left( t \ge t_1 - \left\lceil \frac{u -  d_0 + 1}{2} \right\rceil + 1 \right)  
	\end{align}
	Thus, \eqref{eq:BDSC_bound_fail} will be changed into~\eqref{eq:BDSC_estimate_fail}.
	

%
%


\vspace{10pt}
\bibliographystyle{IEEE}

\begin{thebibliography}{99}

\bibitem{Kuznetsov1974}
A.~V. Kuznetsov and B.~S. Tsybakov, ``{Coding in a memory with defective
	cells},'' \emph{Probl. Peredachi Inf.}, vol.~10, no.~2, pp. 52--60,
Apr.--Jun. 1974.

\bibitem{ElGamal2011}
A.~{El Gamal} and Y.-H. Kim, \emph{{Network Information Theory}}.\hskip 1em
plus 0.5em minus 0.4em\relax Cambridge, U.K.: Cambridge University Press,
2011.

\bibitem{Heegard1983capacity}
C.~Heegard and A.~El~Gamal, ``{On the capacity of computer memory with
	defects},'' \emph{{IEEE} Trans. Inf. Theory}, vol.~29, no.~5, pp. 731--739,
Sep. 1983.

\bibitem{Tsybakov1975additive}
B.~S. Tsybakov, ``{Additive group codes for defect correction},'' \emph{Probl.
	Peredachi Inf.}, vol.~11, no.~1, pp. 111--113, Jan.--Mar. 1975.

\bibitem{Dumer1990}
I.~I. Dumer, ``{Asymptotically optimal linear codes correcting defects of
	linearly increasing multiplicity.}'' \emph{Probl. Peredachi Inf.}, vol.~26,
no.~2, pp. 93--104, Apr.--Jun. 1990.

\bibitem{Lastras-Montano2010}
L.~A. Lastras-Montano, A.~Jagmohan, and M.~M. Franceschini, ``{Algorithms for
	memories with stuck cells},'' in \emph{Proc. {IEEE} Int. Symp. Inf. Theory
	(ISIT)}, Jun. 2010, pp. 968--972.

\bibitem{Jagmohan2010coding}
A.~Jagmohan, L.~A. Lastras-Montano, M.~M. Franceschini, M.~Sharma, and
R.~Cheek, ``{Coding for multilevel heterogeneous memories},'' in \emph{Proc.
	{IEEE} Int. Conf. Commun. (ICC)}, May 2010, pp.
1--6.

\bibitem{Hwang2011iterative}
E.~Hwang, B.~Narayanaswamy, R.~Negi, and B.~V.~K. {Vijaya Kumar}, ``{Iterative
	cross-entropy encoding for memory systems with stuck-at errors},'' in
\emph{Proc. {IEEE} Global Commun. Conf. (GLOBECOM)}, Dec. 2011, pp. 1--5.

\bibitem{Jacobvitz2013}
A.~N. Jacobvitz, R.~Calderbank, and D.~J. Sorin, ``{Coset coding to extend the lifetime of memory},'' in \emph{Proc. 19th {IEEE} Int. Symp. on High Performance Comput. Archi. (HPCA)}, Feb. 2013, pp. 222--233.

\bibitem{Kim2013coding}
Y.~Kim and B.~V.~K. {Vijaya Kumar}, ``{Coding for memory with stuck-at
	defects},'' in \emph{Proc. {IEEE} Int. Conf. Commun. (ICC)}, Jun. 2013, pp. 4347--4352.

\bibitem{Kim2013redundancy}
Y.~Kim and B.~V.~K. {Vijaya Kumar}, ``{Redundancy allocation of partitioned
	linear block codes},'' in \emph{Proc. {IEEE} Int. Symp. Inf. Theory (ISIT)}, Jul. 2013, pp. 2374--2378.

\bibitem{Mahdavifar2015}
H.~Mahdavifar and A.~Vardy, ``{Explicit capacity achieving codes for defective
	memories},'' in \emph{Proc. {IEEE} Int. Symp. Inf. Theory (ISIT)}, Jun. 2015, pp. 641--645.

\bibitem{Kim2016locally} Y.~Kim, A.~A.~Sharma, R.~Mateescu, S.-H.~Song, Z.~Z.~Bandic, J.~A.~Bain, and B.~V.~K.~Vijaya Kumar, ``{Locally rewritable codes for resistive memories,}'' \emph{IEEE J. Sel. Areas Commun. (JSAC)}, vol. 34, no. 9, pp. 2470--2485, Sep. 2016.

\bibitem{Pirovano2004}
A.~Pirovano, A.~Redaelli, F.~Pellizzer, F.~Ottogalli, M.~Tosi, D.~Ielmini,
A.~L. Lacaita, and R.~Bez, ``{Reliability study of phase-change nonvolatile
	memories},'' \emph{{IEEE} Trans. Device Mater. Rel.}, vol.~4, no.~3, pp.
422--427, Sep. 2004.

\bibitem{Kim2014dirtyflash}
Y.~Kim and B.~V.~K. {Vijaya Kumar}, ``Writing on dirty flash memory,'' in
\emph{Proc. 52nd Annu. Allerton Conf. Commun., Control, Comput.}, Oct. 2014, pp. 513--520.

\bibitem{Tsybakov1975}
B.~S. Tsybakov, ``{Defect and error correction},'' \emph{Probl. Peredachi
	Inf.}, vol.~11, no.~3, pp. 21--30, Jul.--Sep. 1975.

\bibitem{Prall2007}
K.~Prall, ``{Scaling non-volatile memory below 30nm},'' in \emph{Proc. {IEEE} Non-Volatile Semiconductor Memory Workshop}, Aug. 2007, pp. 5--10.

\bibitem{Heegard1983plbc}
C.~Heegard, ``{Partitioned linear block codes for computer memory with
	``stuck-at'' defects},'' \emph{{IEEE} Trans. Inf. Theory}, vol.~29, no.~6,
pp. 831--842, Nov. 1983.

\bibitem{Zamir2002nested}
R.~Zamir, S.~Shamai, and U.~Erez, ``{Nested linear/lattice codes for structured multiterminal binning},''  \emph{{IEEE} Trans. Inf. Theory}, vol. 48, no. 6, pp. 1250--1276, Jun. 2002.

\bibitem{Macwilliams1977theory}
F.~J. MacWilliams and N.~J.~A. Sloane, \emph{{The Theory of Error-Correcting Codes}}. Amsterdam, The Netherlands:
North-Holland, 1977, pp. 282--283.

\bibitem{Kim2016duality}
Y.~Kim and B.~V.~K. {Vijaya Kumar}, ``Duality between erasures and defects,'' in \emph{Proc. Inf. Theory Appl. Workshop (ITA)}, Feb. 2016, pp. 1--10.

\bibitem{Gelfand1980}
S.~I. Gelfand and M.~S. Pinsker, ``{Coding for channel with random 	parameters},'' \emph{Probl. Contr. and Inf. Theory}, vol.~9, no.~1, pp.
19--31, 1980.


\end{thebibliography}

\end{document}